\renewcommand\footnotetextcopyrightpermission[1]{} 
\pgfplotsset{width=7cm,compat=1.8}
\newtheorem{definition}{Definition}
\newtheorem{lemma}{Lemma}
\newtheorem{assumption}{Assumption}
\def\codesize{\normalsize}
\definecolor{rubystr}{gray}{0.1}
\definecolor{gray_ulisses}{gray}{0.5}
\definecolor{preto_ulisses}{gray}{0.0}
\lstdefinelanguage{ruby}{
	basicstyle=\sf\small\fontseries{m}\selectfont,
        columns=flexible,
	showstringspaces=false,
	literate={ {->}{{$\rightarrow$}}2
               {&&}{{$\land$}}2
               {||}{{$\lor$}}2
               {=>}{{$\Rightarrow$}}2
               {<<}{{\guillemotleft}}1 
               {>>}{{\guillemotright}}1
               {~>}{{$\hookrightarrow$}}2
               {|->}{{$\mapsto$}}1
             },
       escapeinside={(*}{*)},
       alsoletter={?,:, ., !,, [,] \, ,},	
	emphstyle={[1]\sf\fontseries{b}\selectfont},
	sensitive=true,
	aboveskip=\smallskipamount, 
	belowskip=\smallskipamount, 
	morecomment=[l][\color{gray_ulisses}\sf\fontseries{m}\selectfont\itshape\small]{\#},
	stringstyle=\color{rubystr},
	showstringspaces=false,
	showspaces=false,
	breaklines=true,
	showtabs=false,
}
\lstdefinelanguage{sql}{
  basicstyle=\sf\small\fontseries{m}\selectfont,
        columns=flexible,
  emph={[1]
      SELECT, WHERE, ON, IN, INNER, JOIN, FROM},
  emphstyle={[1]\sf\fontseries{b}\selectfont},
  sensitive=true,
  aboveskip=\smallskipamount, 
  belowskip=\smallskipamount, 
  morecomment=[l][\color{gray_ulisses}\sf\fontseries{m}\selectfont\itshape\codesize]{\#\#}
}
\newcommand{\code}[1]{\lstinline[language=ruby,mathescape,basicstyle=\sf\fontseries{m}\selectfont\normalsize,breaklines=true,breakatwhitespace=false,xleftmargin=0pt,xrightmargin=0pt]~#1~}
\def\spmid{\ \mid \ }
\def\MK#1{\textcolor{blue}{\sf MK:$\clubsuit$ #1$\clubsuit$}}
\def\NV#1{\textcolor{magenta}{\sf NV:$\heartsuit$ #1$\heartsuit$}}
\def\emphbf#1{\textbf{\emph{#1}}}
\newcommand{\class}[1]{\textsf{#1}}
\newcommand{\unrefine}[1]{\ensuremath{\llfloor {#1} \rrfloor}\xspace}
\newcommand{\judgementHead}[2]{\ensuremath{\emphbf{#1}\hfill\fbox{#2}}}
\newcommand{\teval}[4]{\ensuremath{#2 \vdash_{#1} #3:#4}}
\newcommand{\tcheck}[5]{\ensuremath{#2 \vdash_{#1} #3\hookrightarrow #4:#5}}
\newcommand{\progtyp}[2]{\ensuremath{#1 \models #2}\xspace}
\newcommand{\bturnstile}[1]{\ensuremath{\vdash_{#1, Base}}}
\newcommand{\bteval}[4]{\ensuremath{#2 \bturnstile{#1} #3:#4}}
\newcommand{\cteval}[3]{\ensuremath{#1(#2) = #3}}
\newcommand{\eeval}[2]{\ensuremath{#1 \rightsquigarrow #2}}
\newcommand{\eevalmulti}[2]{\ensuremath{#1 \rightsquigarrow^* #2}}
\newcommand{\typegoesto}[3]{\ensuremath{\langle #1, #2 \rangle \Downarrow #3}\xspace}
\newcommand{\subst}[2]{\ensuremath{[ #1 \mapsto #2 ]}}
\newcommand{\envbind}[2]{\ensuremath{#1 \text{:} #2}}
\newcommand{\trec}{\texttt{tself}\xspace}
\newcommand\trspace{\vspace{0.8em}}
\newcommand{\subtype}[2]{\ensuremath{#1\leq #2}}
\newcommand\tuniv{\textit{Type}\xspace}
\newcommand{\defof}[3]{\ensuremath{\textit{def\_of}(#1) = #2.#3}}
\newcommand{\dyn}[3]{\ensuremath{\langle #1, #2, #3 \rangle}\xspace}
\newcommand{\typeof}{\textit{type\_of}}
\newcommand{\classtable}{\textit{CT}\xspace}
\newcommand{\typestack}{\textit{TS}}
\newcommand{\stacktop}[2]{\ensuremath{#1 :: #2}}
\newcommand{\tstackelement}[3]{\ensuremath{(#1[#2], #3)}}
\newcommand{\dstackelement}[2]{\ensuremath{(#1, #2)}}
\newcommand{\consistent}[2]{\ensuremath{#1\ \sim\ #2}}
\newcommand{\builtin}{\ensuremath{\mathcal{L}}\xspace}
\newcommand{\userdef}{\mathcal{U}\xspace}
\newcommand{\callop}{\textit{call}\xspace}
\newcommand{\call}[3]{\ensuremath{\callop(#1, #2, #3)}}
\newcommand{\valid}{\ensuremath{\textit{valid}}}
\newcommand\ftypelabel{\texttt{ftype}\xspace}
\newcommand{\ftype}[2]{\ensuremath{\ftypelabel\ #1:#2}}
\newcommand{\rdl}{RDL\xspace}
\newcommand{\tocite}{(\textbf{cite})}
 \newcommand{\name}{CompRDL\xspace}
\newcommand{\comptype}{comp type\xspace} 
\newcommand{\comptypes}{comp types\xspace} 
\newcommand{\Comptype}{Comp type\xspace} 
\newcommand{\Comptypes}{Comp types\xspace} 
\newcommand{\CompType}{Comp Type\xspace}
\newcommand{\CompTypes}{Comp Types\xspace}
\newcommand{\benchtime}[2]{#1 \scriptsize{$\pm$ #2}}
\newcommand\newsubcommand[3]{\newcommand#1{#2\sc@sub{#3}}}
\def\sc@sub#1{\def\sc@thesub{#1}\@ifnextchar_{\sc@mergesubs}{_{\sc@thesub}}}
\def\sc@mergesubs_#1{_{\sc@thesub#1}}
\newcommand\corelang{\ensuremath{\lambda^{C}}\xspace}
\newcommand\dynenv{\ensuremath{E}\xspace}
\newcommand\stack{\ensuremath{S}\xspace}
\newcommand\context{\ensuremath{C}\xspace}
\newcommand\val{\ensuremath{v}\xspace}
\newcommand\expr{\ensuremath{e}\xspace}
\newcommand\prog{\ensuremath{P}\xspace}
\newcommand\type{\ensuremath{A}\xspace}
\newcommand\depmethtype[4]{\ensuremath{(\typevar\text{<:}\mthtype{#1\slash #2)}{#3 \slash #4}}}
\newsubcommand\basetype{\type}{b}
\newcommand\methtype{\ensuremath{\sigma}\xspace}
\newcommand\dmethtype{\ensuremath{\delta}\xspace}
\newcommand\pdef[4]{\ensuremath{\texttt{def}\ #1(#2):#3\ =\ #4}\xspace}
\newcommand\bilabel{\texttt{lib}\xspace}
\newcommand\bidef[3]{\ensuremath{\bilabel\ #1(#2):#3}}
\newcommand\var{\ensuremath{x}\xspace}
\newcommand\typevar{\ensuremath{a}\xspace}
\newcommand\tnil{\textit{Nil}\xspace}
\newcommand\tobj{\textit{Obj}\xspace}
\newcommand\vnil{\texttt{nil}\xspace}
\newcommand\vtrue{\texttt{true}\xspace}
\newcommand\vfalse{\texttt{false}\xspace}
\newcommand\vinst[1]{\ensuremath{[#1]}\xspace}
\newcommand\eseq[2]{\ensuremath{#1 ; #2}\xspace}
\newcommand\elet[3]{\ensuremath{\texttt{let}\ #1 = #2\ \texttt{in}\ #3}\xspace}
\newcommand\fassn[2]{\ensuremath{#1 = #2}\xspace}
\newcommand\eif[3]{\ensuremath{\textbf{\texttt{if}}\ #1\ \textbf{\texttt{then}}\ #2\ \textbf{\texttt{else}}\ #3}\xspace}
\newcommand\eself{\texttt{self}\xspace}
\newcommand\emethcall[3]{\ensuremath{#1.#2(#3)}\xspace}
\newcommand\echeckmeth[4]{\ensuremath{\lceil #1 \rceil\emethcall{#2}{#3}{#4}}\xspace}
\newcommand\enew[1]{\ensuremath{#1 . \textbf{\texttt{new}}}\xspace}
\newcommand\mthtype[2]{\ensuremath{#1 \rightarrow #2}\xspace}
\newcommand\objrecord{\ensuremath{O}}
\newcommand\eq[2]{\ensuremath{#1==#2}}
\newcommand\tbool{\textit{Bool}\xspace}
\newcommand\ttrue{\textit{True}\xspace}
\newcommand\tfalse{\textit{False}\xspace}
\newcommand\rulename[1]{\textsc{#1}\xspace}
\newcommand\tappBase{\rulename{T-App}}
\newcommand\tappComp{\rulename{T-App-Lib}}
\newcommand\tNil{\rulename{T-Nil}}
\newcommand\tTrue{\rulename{T-True}}
\newcommand\tFalse{\rulename{T-False}}
\newcommand\tObject{\rulename{T-Object}}
\newcommand\tObj{\rulename{T-Obj}}
\newcommand\tType{\rulename{T-Type}}
\newcommand\tVar{\rulename{T-Var}}
\newcommand\tEq{\rulename{T-Eq}}
\newcommand\tLoc{\rulename{T-Loc}}
\newcommand\tSelf{\rulename{T-Self}}
\newcommand\tTSelf{\rulename{T-TSelf}}
\newcommand\tSeq{\rulename{T-Seq}}
\newcommand\tNew{\rulename{T-New}}
\newcommand\tIf{\rulename{T-If}}
\newcommand\cappBaseUD{\rulename{C-AppUD}}
\newcommand\cappBaseBI{\rulename{C-AppLib}}
\newcommand\cappComp{\rulename{C-App-Comp}}
\newcommand\cNil{\rulename{C-Nil}}
\newcommand\cTrue{\rulename{C-True}}
\newcommand\cFalse{\rulename{C-False}}
\newcommand\cType{\rulename{C-Type}}
\newcommand\cVar{\rulename{C-Var}}
\newcommand\cEq{\rulename{C-Eq}}
\newcommand\cObj{\rulename{C-Obj}}
\newcommand\cSelf{\rulename{C-Self}}
\newcommand\cTSelf{\rulename{C-TSelf}}
\newcommand\cSeq{\rulename{C-Seq}}
\newcommand\cNew{\rulename{C-New}}
\newcommand\cIf{\rulename{C-If}}
\newcommand\reVar{\rulename{E-Var}}
\newcommand\reSelf{\rulename{E-Self}}
\newcommand\reTSelf{\rulename{E-TSelf}}
\newcommand\reLet{\rulename{E-Let}}
\newcommand\reSeq{\rulename{E-Seq}}
\newcommand\reNew{\rulename{E-New}}
\newcommand\reLoc{\rulename{E-Loc}}
\newcommand\reEIfTrue{\rulename{E-IfTrue}}
\newcommand\reEIfFalse{\rulename{E-IfFalse}}
\newcommand\reEEqTrue{\rulename{E-EqTrue}}
\newcommand\reEEqFalse{\rulename{E-EqFalse}}
\newcommand\reAppUD{\rulename{E-AppUD}}
\newcommand\reContext{\rulename{E-Context}}
\newcommand\reAppBI{\rulename{E-AppLib}}
\newcommand\reRet{\rulename{E-Ret}}
\newcommand\tPDef{\rulename{T-PDef}}
\newcommand\tPBuiltIn{\rulename{T-PLib}}
\newcommand\tPSeq{\rulename{T-PSeq}}
\begin{document}
\title[Type-Level Computations for Ruby Libraries]{Type-Level Computations for Ruby Libraries}


\author{Milod Kazerounian}
\affiliation{
  \institution{University of Maryland}
  \city{College Park}
  \state{Maryland}
  \country{USA}
}
\email{milod@cs.umd.edu}

\author{Sankha Narayan Guria}
\affiliation{
  \institution{University of Maryland}
  \city{College Park}
  \state{Maryland}
  \country{USA}
}
\email{sankha@cs.umd.edu}

\author{Niki Vazou}
\affiliation{
  \institution{IMDEA Software Institute}
  \city{Madrid}
  \country{Spain}
}
\email{niki.vazou@imdea.org}

\author{Jeffrey S. Foster}
\affiliation{
	\institution{Tufts University}
	\city{Medford}
	\state{Massachusetts}
	\postcode{02155}
	\country{USA}
}
\email{jfoster@cs.tufts.edu}

\author{David Van Horn}
\affiliation{
  \institution{University of Maryland}
  \city{College Park}
  \state{Maryland}
  \country{USA}
}
\email{dvanhorn@cs.umd.edu}

\begin{abstract}
  Many researchers have explored ways to bring static typing to
  dynamic languages. However, to date, such systems are not precise
  enough when types depend on values, which often arises
  when using certain Ruby libraries. For example, the type safety of
  a  database query in Ruby on Rails depends on the table and column
  names used in the query. To address this issue, we introduce
  CompRDL, a type system for Ruby that allows library method type
  signatures to include \emph{type-level computations} (or \emph{comp
    types} for short). Combined with singleton types for table and
  column names, comp types let us give database query methods type
  signatures that compute a table's schema to yield very precise type
  information. Comp types for hash, array, and string libraries can
  also increase precision and thereby reduce the need for type casts.
  We formalize CompRDL and prove its type system sound. Rather than
  type check the bodies of library methods with comp types---those
  methods may include native code or be complex---CompRDL inserts
  run-time checks to ensure library methods abide by their computed
  types. We evaluated CompRDL by writing annotations with type-level
  computations for several Ruby core libraries and database query
  APIs. We then used those annotations to type check two popular Ruby
  libraries and four Ruby on Rails web apps. We found the 
  annotations were relatively compact and could successfully
  type check 132 methods across our subject programs. Moreover, the use of
  type-level computations allowed us to check more expressive
  properties, with fewer manually inserted casts, than was possible
  without type-level computations.  In the process, we found two
  type errors and a documentation error that were confirmed by the developers. Thus, we believe
  CompRDL is an important step forward in bringing precise static type
  checking to dynamic languages.

\end{abstract}



\keywords{type-level computations, dynamic languages, types,
  Ruby, libraries, database queries}  

\maketitle
\renewcommand{\shortauthors}{M. Kazerounian, S.N. Guria, N. Vazou, D. Van Horn, J.S. Foster}

\section{Introduction}
\label{sec:introduction}

There is a large body of research on adding static typing to dynamic
languages~\cite{Furr:2009, Ren, ren:oops13, Tobin-Hochstadt2006,
  Tobin-Hochstadt2008, Anderson2005,
  Lerner2013,Thiemann2005,Ancona:2007,Aycock2000}. However, existing
 systems have limited support for the case when \emph{types}
depend on \emph{values}. Yet this case occurs surprisingly often,
especially in Ruby libraries.
For example, consider the following database query, written for
a hypothetical Ruby on Rails (a web framework, called Rails henceforth) app:
%
\begin{rcode}
Person.joins(:apartments).where({name: 'Alice', age: 30, apartments: {bedrooms: 2}})
\end{rcode}
This query uses the \emph{ActiveRecord} DSL to join two database
tables, \code{people}\footnote{Rails knows the plural of person is people.} and
\code{apartments}, and then filter on the
values of various columns (\code{name}, \code{age}, \code{bedrooms})
in the result.

We would like to type check such code, e.g., to ensure the columns
exist and the values being matched are of the right types. But we face
an important problem: what type signature do we give \textsf{joins}? Its
return type---which should describe the joined table---depends on the
value of its argument. Moreover, for $n$ tables, there are $n^2$ ways
to join two of them, $n^3$ ways to join three of them,
etc. Enumerating all these combinations is impractical.

To address this problem, in this paper we introduce \name{}, which
extends RDL~\cite{rdl-github}, a Ruby type system, to include method
types with \emph{type-level computations}, henceforth referred to as
\textit{\comptypes}. More specifically, in \name{} we can annotate
library methods with type signatures in which Ruby expressions can
appear as types. During type checking, those expressions are evaluated
to produce the actual type signature, and then typing proceeds as
usual. For example, for the call to \textsf{Person.joins}, by using a
singleton type for \code{:apartments}, a type-level computation can
look up the database schemas for the receiver and argument and then
construct an appropriate return type.\footnote{The use of
type-level computations and singleton types could be considered
dependent typing, but as our type system is much more restricted we introduce new
terminology to avoid confusion (see \S~\ref{subsec:discussion} for discussion).}

Moreover, the same type signature can work for any model class and any
combination of joins. And, because \name{} allows arbitrary
computation in types, \name{} type signatures have access to the full,
highly dynamic Ruby environment.  This allows us to provide very
precise types for the large set of Rails database query methods. It
also lets us give precise types to methods of \emph{finite hash types}
(heterogeneous hashes), \emph{tuple types} (heterogeneous arrays), and
\emph{const string types} (immutable strings), which can help
eliminate type casts that would otherwise be required. 

Note that in all these
cases, we apply \comptypes to library methods whose bodies we do not type check, in
part to avoid complex, potentially undecidable reasoning about whether a method body matches a
\comptype, but more practically because those library methods are
either implemented in native code (hashes, arrays, strings) or are
complex (database queries).  This design choice makes \name a particularly
practical system which we can apply to real-world programs. 
To maintain soundness, we insert dynamic
checks to ensure that these methods abide by their computed types at
runtime.  (\S~\ref{sec:overview} gives an overview of typing in
\name{}.)

We introduce \corelang{}, a core, object-oriented language that
formalizes \name{} type checking. In \corelang, library methods can be
declared with signatures of the form $\depmethtype{\expr_1}{A_1}{\expr_2}{A_2}$,
where $A_1$ and $A_2$ are the conventional (likely overapproximate)
argument and return types of the method. The precise argument and
return types are determined by evaluating $\expr_1$ and $\expr_2$,
respectively, and that evaluation may refer to the type of the
receiver and the type $a$ of the argument. \corelang also performs
type checking on $\expr_1$ and $\expr_2$, to ensure they
do not go wrong. To avoid potential infinite recursion, \corelang does
not use type-level computations during this type checking process, instead using
the conventional types for library methods. Finally, \corelang
includes a rewriting step to insert dynamic checks to ensure library
methods abide by their computed types. We prove  \corelang{}'s
type system is sound. (See \S~\ref{sec:formalism} for our formalism.)

We implemented \name on top of RDL, an existing Ruby type checker.
Since \name can include type-level computation that relies on mutable
values, \name inserts additional runtime checks to ensure such
computations evaluate to the same result at method call time as they
did at type checking time. Additionally, \name{} uses a lightweight
analysis to check that
type-level computations (and thus type checking) terminate.
The termination analysis uses purity effects to
check that calls that invoke iterator methods---the main source of
looping in Ruby, in our experience---do not mutate the receiver, which
could introduce non-termination.  Finally, we found that several kinds
of \comptypes we developed needed to include weak type updates to
handle mutation in Ruby programs.  (\S~\ref{sec:implementation}
describes our implementation in more detail.)

We evaluated \name{}
by first using it to write type annotations for 482 Ruby core library
methods and 104 Rails database query methods. We found that by using
helper methods, we could write very precise type annotations for all
586 methods with just a few lines of code on average. Then, we used
those annotations to type check 132 methods across two Ruby APIs and
four Ruby on Rails web apps.  We were able to successfully type check
all these methods in approximately 15 seconds total. In doing so, we
also found two type errors and a documentation error, which we confirmed with the developers.
We also found that, with \comptypes, type checking these benchmarks
required 4.75$\times$ fewer type cast annotations compared to
standard types, demonstrating \comptypes' increased
precision. (\S~\ref{sec:experiments} contains the results of our
evaluation.)



Our results suggest that using type-level
computations provides a powerful, practical, and precise way to
statically type check code written in dynamic languages.

\section{Overview}
\label{sec:overview}


The starting point for our work is RDL~\cite{rdl-github}, a system for
adding type checking and contracts to Ruby programs. RDL's type system
is notable because type checking statically analyzes source code, but
it does so \emph{at runtime}. For example, line~\ref{line:typesig} in
Figure~\ref{fig:discourse} gives a type signature for the method
defined on the subsequent line. This ``annotation'' is actually a call
to the method \code{type},\footnote{In Ruby, parentheses in a method
  call are optional.} which stores the type signature in a global
table. The type annotation includes a label \code{:model}. (In Ruby,
strings prefixed by colon are \emph{symbols}, which are interned
strings.)  When the program subsequently calls \code{RDL.do_typecheck
  :model} (not shown), RDL will type check the source code of all
methods whose type annotations are labeled \code{:model}.

This design enables RDL to support the metaprogramming that is common
in Ruby and ubiquitous in Rails. For example, the programmer can
perform type checking after metaprogramming code has run, when
corresponding type definitions are available. See \citet{Ren} for more
details.  We note that while \name benefits from this runtime type
checking approach---we use RDL's representation of types in our \name
signatures, and our subject programs include Rails apps---there is
nothing specific in the design of \comptypes that relies on it, and
one could implement \comptypes in a fully static system.

\subsection{Typing Ruby Database Queries}
\label{subsec:databases}


While RDL's type system is powerful enough to type check Rails apps in
general, it is actually very imprecise when reasoning about database
(DB) queries.
For example, consider Figure~\ref{fig:discourse}, which shows some code
from the \emph{Discourse} app. Among others, this app uses two tables,
\code{users} and \code{emails}, whose schemas are shown
on lines~\ref{line:users-schma} and~\ref{line:emails-schma}.
Each user has an \code{id}, a \code{username}, and a flag indicating
whether the account was \emph{staged}. Such staged accounts were
created automatically by \emph{Discourse} and can be claimed by the
email address owner. An email has an \code{id}, the email address, and
the \code{user_id} of the user who owns the email address.

\begin{figure}
\begin{subfigure}{\columnwidth}
  \begin{rcodebox}
# Table Schema
# users: { id: Integer, username: String, staged: bool } (*\label{line:users-schma}*)
# emails: { id: Integer, email: String, user_id: Integer } (*\label{line:emails-schma}*)

class User < ActiveRecord::Base
  type "(String, String) -> 
  def self.available?(name, email)
      return false if reserved?(name) (*\label{line:reserved}*)
      return true if !User.exists?({username: name}) (*\label{line:userexists}*)
      # staged user accounts can be claimed
      return User.joins(:emails).exists?({staged: true, username: name, emails: { email: email }}) (*\label{line:userjoins}*)
   end
end
\end{rcodebox}
\caption{\label{fig:discourse} \textit{Discourse} code (uses \emph{ActiveRecord}).}
\end{subfigure}

\bigskip{}

\begin{subfigure}{\columnwidth}
\begin{rcodebox}
type Table, :exists?, "(<<schema_type(tself)>>) -> Boolean" (*\label{line:thash}*)
type Table, :joins, "(t<:Symbol) -> (*\label{line:tjoins}*)
	<<if t.is_a?(Singleton) 
	then Generic.new(Table, schema_type(tself).merge(  {t.val=>schema_type(t)}))
	else Nominal.new(Table) 
	end >>"

def schema_type(t) (*\label{line:schematype}*)
   if t.is_a?(Generic) && (t.base == Table)  # Table<T>(*\label{line:gentab}*)
      return t.param # return T
   elsif t.is_a?(Singleton)   # Class or :symbol(*\label{line:schmsym}*)
      table_name = t.val # get the class/symbol vale
      table_type = RDL.db_schema[table_name]
      return table_type.param      
   else # will only be reached for the nominal type Table
     return ... # returns Hash<Symbol, Object>
   end
end
\end{rcodebox}
\caption{\Comptype annotations for  query methods.}
\label{fig:comptype}
\end{subfigure}

\caption{Type Checking Database Queries in \textit{Discourse}.}
\label{fig:db-example}
\end{figure}

Next in the figure, we show code for the class \code{User}, which is a
\emph{model}, i.e., instances of the class correspond to rows in the
\code{users} table.  This class has one method, \textsf{available?},
which returns a boolean indicating whether the username and email
address passed as arguments are available. The method first checks
whether the username was already reserved (line~\ref{line:reserved},
note the postfix \code{if}). If not, it uses the database query method
\textsf{exists?} to see if the username was already taken
(line~\ref{line:userexists}). (Note that in Ruby,
\code{\{a: b\}} is a hash that maps the symbol \code{:a}, which is
suffixed with a colon when used as a key, to the value \code{b}.)
Otherwise, line~\ref{line:userjoins} uses a more complex query to
check whether an account was staged. More specifically, this code
joins the \code{users} and \code{emails} table and then looks for a
match across the joined tables.

We would like to type check the \textsf{exists?} calls in this code to
ensure they are type correct, meaning that the columns they refer to
exist and the values being matched are of the right type. The call on
line~\ref{line:userexists} is easy to check, as RDL can type the
receiver \code{User} as having an \code{exists?} method that takes a
particular \emph{finite hash type} \code{\{c1: t1, ..., cn: tn\}} as
an argument, where the \code{ci} are \emph{singleton types} for
symbols naming the columns, and the \code{ti} are the corresponding
column types.

Unfortunately, the \textsf{exists?} call on line~\ref{line:userjoins}
is another story. Notice that this query calls \textsf{exists?}  on
the result of \textsf{User.\-joins(:emails)}. Thus, to give
\textsf{exists?} a type with the right column information, we need to
have that information reflected in the return type of \textsf{joins}.
Unfortunately, there is no reasonable way to do this in RDL, because
the set of columns in the table returned by \textsf{joins} depends on both the
receiver and the \emph{value} of the argument. We could in theory
overload \textsf{joins} with different return types depending on the
argument type---e.g., we could say that \textsf{User.joins} returns a
certain type when the argument has singleton type
\textsf{:emails}. However, we would need to generate such signatures
for every possible way of joining two tables together, three tables
together, etc., which quickly blows up. Thus, currently, RDL types
this particular \textsf{exists?} call as taking a \code{Hash<Symbol,
  Object>}, which would allow type-incorrect arguments.


\paragraph*{\Comptypes for DB Queries.}

To address this problem, \name allows method type signatures to
include computations that can, on-the-fly, determine the method's
type. Figure~\ref{fig:comptype} gives \comptype signatures for
\textsf{exists?} and \textsf{joins}. It also shows the definition
of a helper method, \code{schema_type}, that is called from the \comptypes.
The \comptypes also make use of a new generic type
\code{Table<T>} to type a DB table whose columns are described by
\code{T}, which should be a finite hash type.

Line~\ref{line:thash} gives the type of \textsf{exists?}. Its argument
is a \comptype, which is a Ruby expression, delimited by
\guillemotleft$\cdot$\guillemotright, that evaluates to a standard
type. When type checking a call to \textsf{exists?} (including those
in the body of \textsf{available?}), CompRDL runs the \comptype
code to yield a standard type, and then proceeds with type
checking as usual with that type.

In this case, to compute the argument type for \textsf{exists?}, we
call the helper method \textsf{schema\_type} with \trec, which is a reserved variable naming
the type of the receiver. The \textsf{schema\_type} method has a few
different behaviors depending on its argument. When given a type
\code{Table<T>}, it returns \code{T}, i.e., the finite hash type
describing the columns. When given a singleton type representing a
class or a symbol, it uses another helper method \code{RDL.db_schema}
(not shown) to look up the corresponding
table's schema and return an appropriate finite hash type. Given any other
type, \textsf{schema\_type} falls back to returning the type
\code{Hash<Symbol, Object>}.

This type signature already allows us to type check the
\textsf{exists?} call on line~\ref{line:userexists}. On this line, the
receiver has the singleton type for the \code{User} class, so
\textsf{schema\_type} will use the second arm of the conditional and
look up the schema for \textsf{User} in the DB.

Line~\ref{line:tjoins} shows the \comptype signature for
\textsf{joins}. The signature's input type binds \code{t} to the actual argument type, and
requires it to be a subtype of \code{Symbol}. For example, for the call
on line~\ref{line:userjoins}, \textsf{t} will be bound to the singleton
type for \textsf{:emails}. The return \comptype can then refer to
\code{t}.  Here, if \textsf{t} is a singleton type, \textsf{joins}
returns a new \code{Table} type that merges the
schemas of the receiver and the argument tables using
\code{schema\_type}. Otherwise, it falls back to producing a
\code{Table} with no schema information. Thus,
the \code{joins} call on line~\ref{line:userjoins} returns type
\begin{center}
	\textsf{\small Table<\{staged:\%bool, username:String, id: Integer,\\
		emails: \{email:String, user\_id: Integer \}\}>}
\end{center}

That is, the type reflects the schemas of both the \code{users}
and \code{emails} tables.
Given this type, we can now type check the \textsf{exists?} call on
line~\ref{line:userjoins} precisely. On this line, the receiver has the
table type given above, so when called by \textsf{exists?} the helper
\textsf{schema\_type} will use the first arm of the conditional and
return the \code{Table} column types, ensuring the query is type checked
precisely.

Though we have only shown types for two query methods
in the figure, we note that \comptypes are easily extensible to other 
kinds of queries. Indeed, we have applied them to 104 methods
across two DB query frameworks (\S~\ref{sec:experiments}).
Furthermore, 
we can also use \comptypes to encode sophisticated invariants.
For example, in Rails, database tables can only be joined if
the corresponding classes have a declared \emph{association}.
We can write a \comptype for \code{joins} that enforces this.
(We omitted this in Figure~\ref{fig:db-example}
for brevity.)

Finally, we note that while we include a ``fallback'' case that
allows \comptypes to default to less precise types when necessary,
in practice this is rarely necessary for DB queries. That is,
parameters that are important for type checking, such as the name of tables
being queried or joined, or the names of columns be queried, 
are almost always provided statically in the code.

\subsection{Avoiding Casts using \CompTypes}
\label{subsec:overview:hashes}

\begin{figure}
\begin{rcodebox}
type Hash, :[], "(k) -> v" 
type Array, :first, "() -> a" 
type :page, "() -> {info: Array<String>, title: String}''

type "() -> String"
def image_url()
  page[:info].first  # can't type check
  # Fix: RDL.type_cast(page[:info], "Array<String>").first (*\label{line:typecast}*)
 end	
\end{rcodebox}

\caption{Type Casts in a Method.}
\label{fig:wiki}
\end{figure}

In addition to letting us find type errors in code we could not
previously type check precisely enough, the increased precision
of \comptypes can also help eliminate type casts.

For example, consider the code in Figure~\ref{fig:wiki}. The first
line gives the type signature for a method of \code{Hash}, which is
parameterized by a key type \code{k} and a value type \code{v}
(declarations of the parameters not shown). The specific method is
\textsf{Hash\#[]},\footnote{Here we use the Ruby idiom that
  \textsf{A\#m} refers to the instance method \textsf{m} of class
  \textsf{A}.} which, given a key, returns the corresponding
value. Notably, the form \code{x[k]} is desugared to \code{x.[](k)},
and thus hash lookup, array index, and so forth are methods rather
than built-in language constructs.

The second line similarly gives a type for \textsf{Array\#first},
which returns the first element of the array. Here type variable
\code{a} is the array's contents type (declaration also not
shown). The third line gives a type for a method \code{page} of the
current class, which takes no arguments and returns a hash in which
\code{:info} is mapped to an \code{Array<String>} and \textsf{:title} is mapped
to a \code{String}.

Now consider type checking the \code{image_url} method defined at the
bottom of the figure. This code is extracted and simplified from a
\emph{Wikipedia} client library used in our experiments (\S~\ref{sec:experiments}). Here, since
\code{page} is a no-argument method, it can be invoked without any
parentheses. We then invoke \textsf{Hash\#[]} on the result.

Unfortunately, at this point type checking loses precision. The
problem is that whenever a method is invoked on a finite hash type
\code{\{c1: t1, ..., cn: tn\}}, RDL (retroactively) gives up tracking
the type precisely and promotes it to \code{Hash<Symbol, t1 or...or
  tn>}~\cite{rdl-github}. In this case, \code{page}'s return type is
promoted to \code{Hash<Symbol, Array<String> or String>}.

Now the type checker gets stuck. It reasons that \code{first} could be
invoked on an array or a string, but \code{first} is defined only for
the former and not the latter. The only currently available fix is to
insert a type cast, as shown in the comment on line~\ref{line:typecast}.

One possible solution would be to add special-case support for
\code{[]} on finite hash types. However, this is only one of 54
methods of \code{Hash},
which is a lot of behavior to special-case. Moreover, Ruby programs can
\emph{monkey patch} any class, including \code{Hash},
to change library methods' behaviors. This makes building special support
for those methods inelegant and potentially brittle since the
programmer would have no way to adjust the typing of those methods.

In \name, we can solve this problem with a \comptype annotation. More
specifically, we can give \textsf{Hash\#[]} the following type:
\begin{rcode}
  type Hash, :[], "(t<:Object) -> 
    <<if tself.is_a?(FiniteHash) && t.is_a?(Singleton)
        then tself.elts[t.val]
        else tself.value_type end>>"
\end{rcode}
This \comptype specifies that if the receiver has a finite hash type
and the key has a singleton type, then \textsf{Hash\#[]} returns the
type corresponding to the key, otherwise it returns a value type
covering all possible values (computed by \code{value_type},
definition not shown).

Notice that this signature allows \code{image_url} to type check
without any additional casts. The same idea can be applied to many
other \code{Hash} methods to give them more precise types.

\paragraph*{Tuple Types.}

In addition to finite hash types, RDL has a special \emph{tuple type}
to model heterogeneous \code{Array}s. As with finite hash types, RDL
does not special-case the \code{Array} methods for tuples, since there
are 124 of them. This leads to a loss of precision when invoking
methods on values with tuple types. However, analogously to finite
hash tables, \comptypes can be used to recover precision. As examples,
the \textsf{Array\#first} method can be given a \comptype which returns 
the type of the first element of a tuple, and
the \comptype for \textsf{Array\#[]} has essentially the same logic as
\textsf{Hash\#[]}.

\paragraph*{Const String Types.}

As another example, Ruby strings are mutable, hence RDL does not give
them singleton types. (In contrast, Ruby symbols are immutable.) This
is problematic, because types might depend on string values. In
particular, in the next section we explore reasoning about string
values during type checking raw SQL queries.

Using \comptypes, we can assign singleton types to strings wherever possible.
We introduce a new \emph{const string} type representing strings that
are never written to. \name treats const strings as singletons, and
methods on \code{String} are given \comptypes that perform precise
operations on const strings and fall back to the \code{String} type as
needed. We discuss handling mutation for const strings, finite hashes, and tuples in
Section~\ref{sec:implementation}.

\subsection{SQL Type Checking}
\label{subsec:sqltypecheck}

\begin{figure}
\begin{rcodebox}
# Table Schema
# posts table { id: Integer, topic_id: Integer, ... }
# topics table { id: Integer, title: String, ... }
# topic_allowed_groups table { group_id: Integer, topic_id: Integer }

# Query with SQL strings
Post.includes(:topic)(*\label{line:sqlstr}*)
  .where('topics.title IN (SELECT topic_id FROM topic_allowed_groups WHERE `group_id` = ?)', self.id)

type Table, :where, "(t <: <<if t.is_a?(ConstString)(*\label{line:wheretype}*)
  then sql_typecheck(tself, t)
  else schema_type(tself)
  end >>) -> <<tself>>"
\end{rcodebox}

\caption{Type Checking SQL Strings in \textit{Discourse}.}
\label{fig:sql-example}
\end{figure}

As we saw in Figure~\ref{fig:db-example}, \emph{ActiveRecord} uses a
DSL that makes it easier to construct queries inside of Ruby. However,
sometimes programmers need to include raw SQL in their queries, either
to access a feature not supported by the DSL or to improve performance
compared to the DSL-generated query.

Figure~\ref{fig:sql-example} gives one such example, extracted and
simplified from \emph{Discourse}, one of our subject programs. Here
there are three relevant tables: \code{posts}, which stores posted
messages; \code{topics}, which stores the topics of posts; and
\textsf{topic\-\_allowed\-\_groups}, which is used to limit the topics allowed
by certain user groups.

Line~\ref{line:sqlstr} shows a query that includes raw SQL. First, the
\code{posts} and \code{topics} tables are joined via the
\code{includes} method. (This method does eager loading whereas
\code{joins} does lazy loading.) Then \code{where} filters the
resulting table based on some conditions. In this case, the conditions
involve a nested SQL query, which cannot be expressed except using raw
SQL that will be inserted into the final generated query.

This example also shows another feature: any \code{?}'s that appear in
raw SQL are replaced by additional arguments to \code{where}. In this
case, the \code{?} will be replaced by \textsf{self.id}.

We would like to extend type checking to also reason about the raw SQL
strings in queries, since they may have errors. In this particular
example, we have injected a bug. The inner \code{SELECT} returns a set
of integers, but \code{topics.title} is a string, and it is a type
error to search for a string in an integer set.

To find this bug, we developed a simple type checker for a subset of
SQL, and we wrote a \comptype for \textsf{where} that invokes it as
shown on line~\ref{line:wheretype}. In particular, if the type of the argument to
\code{where}, here referred to by \code{t}, 
is a const string,
then we type check that string as raw SQL, and
otherwise we compute the valid parameters of \code{where} using the
\code{schema_type} method from Figure~\ref{fig:db-example}. The result
of \code{where} has the same type as the receiver.

The \code{sql_typecheck} method (not shown) takes the receiver type,
which will be a \class{Table} with a type parameter describing the
schema, and the SQL string. 
One challenge that arises in type checking the SQL string is that it is actually
only a fragment of a query, which therefore cannot be directly parsed using
a standard SQL parser. We solve this problem by creating a complete,
but artificial, SQL query into which we inject the fragment. This query
is never run, but it is syntactically correct so it can be parsed.
Then, we replace any
\code{?}'s with placeholder AST nodes that store the types of the
corresponding arguments.

For example, the raw SQL in Figure~\ref{fig:sql-example} gets
translated to the following SQL query:
\begin{sqlcode}
SELECT * FROM `posts` INNER JOIN `topics`
  ON a.id = b.a_id
  WHERE topics.title IN (SELECT topic_id FROM topic_allowed_groups WHERE `group_id` = [Integer])
\end{sqlcode}
Notice the table names (\code{posts}, \code{topics}) occur on the
first line and the \code{?} has been replaced by a placeholder
indicating the type \code{Integer} of the argument. Also note that the
column names to join on (which are arbitrary here) are ignored by our
type checker, which currently only looks for errors in the where
clause.

Once we have a query that can be parsed, we can type check it using
the DB schema. In this case, the type mismatch between
\textsf{topics.title} and the inner query will be reported.

In \S~\ref{subsec:databases}, \comptypes were evaluated
to produce a normal type signature.
However, we use \comptypes in a slightly different way for checking
SQL strings.
The \code{sql_typecheck} method will itself perform type checking
and provide a detailed message when an error is found.
If no error is found, \code{sql_typecheck} will simply return the
type \textsf{String}, allowing type checking to proceed.

\subsection{Discussion}
\label{subsec:discussion}

Now that we have seen \name in some detail, we can discuss several
parts of its design.

\paragraph{Dynamic Checks.}

In type systems with type-level computations, or more generally
dependent type systems, comparing two types for equality is often
undecidable, since it requires checking if computations are equivalent.

To avoid this problem, \name only uses \comptypes for methods which
themselves are not type checked. For example, \textsf{Hash\#[]} is
implemented in native code, and we have not attempted to type check
\emph{ActiveRecord}'s \code{joins} method, which is part of a very
complex system.

As a result, type checking in \name is decidable. \Comptypes are only
used to type check method calls, meaning we will always have access
to the types of the receiver and arguments in a method call.
Additionally, in all cases we have encountered in practice, the types 
of the receiver and arguments are ground types (meaning they do not
contain type variables). 
Thus, \comptypes can be fully evaluated to non-\comptypes
before proceeding to type checking.

For soundness, since we do not type check the bodies of
\comptype-annotated methods, \name inserts dynamic checks at calls to
such methods to ensure they match their computed types. For example,
in Figure~\ref{fig:wiki}, \name inserts a check that
\code{page[:info]} returns an \code{Array}.
This follows the approach of gradual~\cite{Siek2006} and hybrid~\cite{Flanagan2006} typing,
in which dynamic checks guard 
statically unchecked code.


We should also note that although our focus is on applying \comptypes
to libraries, they can be applied to any method at the cost of dynamic
checks for that method rather than static checks. For example, they
could be applied to a user-defined library wrapper.

\paragraph{Termination.}

%
%

A second issue for the decidability of \comptypes is that type-level
computations could potentially not terminate. To avoid this
possibility, we implement a termination checker for \comptypes.  At a
high level, CompRDL ensures termination by checking that iterators used
by type-level code do not mutate their receivers and by
forbidding type-level code from using looping
constructs. We also assume there are no recursive method calls in
type-level code.  We discuss termination checking in more detail in
\S~\ref{sec:implementation}.

\paragraph{Value Dependency.} 

We note that, unlike dependent types
(e.g., Coq~\cite{Pierce:SFold}, Agda~\cite{Norell2009}, F*~\cite{Swamy:2016})
where types depend directly on terms, 
in \name types depend on the \textit{ types} of terms. 
For instance, in a \comptype \code{(t<:Object) -> tres}
the result type \code{tres} can depend on the type \code{t} of the argument.
Yet, since singleton types lift expressions into types, 
we could still use \name to express some value dependencies in types 
in the style of dependent typing. 


\paragraph{Constant Folding.}

Finally, in RDL, integers and floats have singleton types. Thus, we
can use \comptypes to lift some arithmetic computations to the type
level. For example, \name can assign the expression \code{1+1} the
type \textsf{Singleton(2)} instead of \code{Integer}. This effectively
incorporates constant folding into the type checker.

While we did write such \comptypes for \code{Integer} and \code{Float}
(see Table~\ref{tab:deptype-defn}), we found that this precision was
not useful, at least in our subject programs. The reason is that RDL
only assigns singleton types to constants, and typically arithmetic
methods are not applied to constant values. Thus, though we have
written \comptypes for the \code{Integer} and \code{Float} libraries,
we have yet to find a useful application for them in practice. We
leave further exploration of this topic to future work.

\section{Soundness of Comp Types}\label{sec:formalism}

In this section we formalize \name as \corelang, a core
object-oriented calculus that includes \comptypes for library
methods.  We first define the syntax and semantics of
\corelang (\S~\ref{subsec:formalism:syntax}), and then we formalize
type checking (\S~\ref{subsec:type-checking}). The type checking
process includes a rewriting step to insert dynamic checks to
ensure library methods satisfy their type signatures.
Finally, we prove type soundness (\S~\ref{subsec:formalism:properties}).
For brevity, we leave the full formalism and proofs to Appendix~\ref{sec:appendix}. Here we provide only the key details.

\subsection{Syntax and Semantics}\label{subsec:formalism:syntax}
\begin{figure}[t!]
	$$
	\begin{array}{lccl}
	\textit{Values} \quad
	& \val
	& ::= & \vnil \spmid \vtrue \spmid \vfalse \spmid A 
	\\
	\textit{Expressions} \quad
	& \expr
	& ::= & \val \spmid \var \spmid \typevar \spmid \eself \spmid \trec \\
	&     & \spmid & \enew{A} \spmid \eseq{\expr}{\expr} \spmid \eq{\expr}{\expr} \\
	&     & \spmid & \eif{\expr}{\expr}{\expr} \spmid \emethcall{\expr}{m}{\expr} \\ 
	&     & \spmid &  \echeckmeth{A}{\expr}{m}{\expr} 
	\\
	\textit{Meth. Types} \quad
	& \methtype
	& ::= & \mthtype{A}{A} \\
	\textit{Lib. Meth. Types} \quad
	& \dmethtype
	& ::= & \methtype \spmid \depmethtype{\expr}{A}{\expr}{A}
	\\
	\textit{Programs} \quad
	& \prog
	& ::= & \pdef{A.m}{\var}{\methtype}{\expr} \\
	&      & \spmid & \bidef{A.m}{\var}{\dmethtype} \spmid \eseq{P}{P}
	\\ 
	\textit{Type Env.} \quad
	& \Gamma
	& ::= & \emptyset \spmid \envbind{x}{A}\\
	\textit{Dyn. Env.} \quad
	& E
	& ::= & \emptyset \spmid \envbind{x}{\val} \\
	\textit{Class Table} \quad
	& \classtable
	& ::= & \emptyset \spmid \envbind{A.m}{\dmethtype},\classtable\\ 
	\textit{Method Sets} \quad
	& \userdef
	& : & \textrm{user-defined methods} \\
	& \builtin
	& : & \textrm{library methods} \\
	\end{array}
	$$
	
	$\var,\typevar\in\textrm{var IDs}$, 
	$m\in\textrm{method IDs}$, 
	$A\in\textrm{class IDs}$,
	$\userdef \cap \builtin = \emptyset$
	\caption{Syntax and Relations of \corelang.}
	\label{figure:lang}
\end{figure}	

\begin{figure*}
	\centering
	\judgementHead{Type Checking and Rewriting Rules}{\tcheck{\classtable}{\Gamma}{\expr}{\expr}{A}}
$$
\begin{array}{c}
  	\inference{}{\tcheck{\classtable}{\Gamma}{A}{A}{\tuniv}}[\cType]

  \qquad
  
	\inference{
		\tcheck{\classtable}{\Gamma}{\expr}{\expr'}{A} &&
		\cteval{\classtable}{A.m}{\mthtype{A_1}{A_2}} &&
		A.m\in\userdef \\
		\tcheck{\classtable}{\Gamma}{\expr_x}{\expr_x'}{A_x} &&
		\subtype{A_x}{A_1} \\
	}
	{\tcheck{\classtable}{\Gamma}{\emethcall{\expr}{m}{\expr_x}}{\emethcall{\expr'}{m}{\expr_x'}}{A_2}}[\cappBaseUD]
\\ \\
		\inference{
		\tcheck{\classtable}{\Gamma}{\expr}{\expr'}{A} \\
		\cteval{\classtable}{A.m}{\mthtype{A_1}{A_2}} \\
		A.m\in\builtin &&
		\subtype{A_x}{A_1} \\
		\tcheck{\classtable}{\Gamma}{\expr_x}{\expr_x'}{A_x}
	}
	{\tcheck{\classtable}{\Gamma}{\emethcall{\expr}{m}{\expr_x}}{\echeckmeth{A_2}{\expr'}{m}{\expr_x'}}{A_2}}[\cappBaseBI]

  \qquad
  
	\inference{
		\tcheck{\classtable}{\Gamma}{\expr}{\expr'}{A} &&
		\cteval{\classtable}{A.m}\depmethtype{\expr_{t1}}{A_{t1}}{\expr_{t2}}{A_{t2}} \\
		A.m\in \builtin &&
		\tcheck{\classtable}{\Gamma}{\expr_x}{\expr_x'}{A_x}  \\
		\tcheck{\unrefine{\classtable}}{\envbind{\typevar}{\tuniv}, \envbind{\trec}{\tuniv}}{\expr_{t1}}{\expr_{t1}'}{\tuniv} \\
		\typegoesto{\subst{\typevar}{A_x}\subst{\trec}{A}}{\expr_{t1}'}{A_1} && 
		\subtype{A_x}{A_1}\\
                \tcheck{\unrefine{\classtable}}{\envbind{\typevar}{\tuniv},\envbind{\trec}{\tuniv}}{\expr_{t2}}{\expr_{t2}'}{\tuniv} \\
                \typegoesto{\subst{\typevar}{A_x}\subst{\trec}{A}}{\expr_{t2}'}{A_2} \\
	}
	{\tcheck{\classtable}{\Gamma}{\emethcall{\expr}{m}{\expr_x}}{\echeckmeth{A_2}{\expr'}{m}{\expr_x'}}{A_2}}[\cappComp]
\end{array}	
$$	
	\caption{A subset of the type checking and rewriting rules for \corelang.}
	\label{figure:check-insertion}
\end{figure*}

Figure~\ref{figure:lang} gives the syntax of
\corelang.
\textit{Values} \val include \vnil, \vtrue, and \vfalse. To support
\comptypes, class IDs $A$, which are the base types in
\corelang, are also values. We assume the set of class IDs includes
several built-in classes: \tnil, the class of \vnil; \tobj, which is
the root superclass; \ttrue and \tfalse, which are the classes of \vtrue
and \vfalse, respectively, as well as their superclass \tbool; and
\tuniv, the class of base types $A$.


\textit{Expressions} \expr include values \val and variables \var and
\typevar. By convention, we use the former in regular program
expressions and the latter in \comptypes.
The special variable \eself names the receiver of a method call,
and the special variable \trec names the \textit{type} of the receiver
in a \comptype.
New object instances are created with \enew{A}.
Expressions also include 
sequences \eseq{\expr}{\expr},
conditionals \eif{\expr}{\expr}{\expr}, and 
method calls \emethcall{\expr}{m}{\expr}, where, to simplify the
formalism, methods take one argument.
Finally, our type system translates calls to library methods into
\emph{checked method calls} \echeckmeth{A}{\expr}{m}{\expr}, which
checks at run-time that the value returned from the call has type $A$.
We assume this form does not appear in the surface syntax.


%
We assume the classes form a lattice with \tnil as the bottom and
\tobj as the top.
We write the least upper bound of $A_1$ and $A_2$ as $A_1 \sqcup A_2$.
For simplicity, we assume the lattice correctly models
the program's classes,
i.e., if $A \leq A'$, then $A$ is a subclass of $A'$ by the usual
definition.
Lastly, three of the built-in classes, \tnil, \ttrue, and \tfalse, are
\emph{singleton types}, i.e., they contain only the values \vnil,
\vtrue, and \vfalse, respectively. Extending \corelang with support
for more kinds of singleton types is straightforward.

%
\textit{Method Types} \methtype 
are of the form 
\mthtype{A'}{A} where 
$A'$ and $A$ are the domain and range types, respectively.
\textit{Library Method Types}
\dmethtype are either method types or 
have the form \depmethtype{\expr'}{A'}{\expr}{A}, 
where $\expr'$ and $\expr$ are expressions that evaluate to types and
that can refer to the variables \typevar and \trec.
The base types $A'$ and $A$ provide an upper bound 
on the respective expression types, 
i.e., for any \typevar, expressions $\expr'$ and $\expr$ should evaluate to subtypes of  
$A'$ and $A$, respectively. These upper bounds are used for
type checking \comptypes (\S~\ref{subsec:type-checking}).

Finally, \textit{programs} are sequences of method definitions and
library method declarations.

\paragraph*{Dynamic Semantics.}
%
The dynamic semantics of \corelang 
are the small-step
semantics of~\citet{Ren},
modified to throw blame (\S~\ref{subsec:formalism:properties}) when a checked method call fails. 
They use \textit{dynamic environments} $E$,
defined in Figure~\ref{figure:lang}, which map variables to values.
We define the relation
$
\typegoesto{\dynenv}{\expr}{\expr'} 
$,
meaning the expression $\expr$ evaluates to $\expr'$
under dynamic environment \dynenv. The full evaluation rules
use a stack as well~\ref{sec:appendix},
but we omit the stack here for simplicity.

\paragraph*{Example.} 
As an example \comptype in the formalism, consider type
checking the expression \emethcall{\vtrue}{\land}{\vtrue}, where the
$\land$ method returns the logical conjunction of the receiver and
argument. Standard type checking would assign this expression the type
$\tbool$. However, with \comptypes we can do better.

Recall that \vtrue and \vfalse are members of the singleton types
\ttrue and \tfalse. Thus, we can write a \comptype for the $\land$
method that yields a singleton return type when the arguments are
singletons, and \tbool in the fallback case:
$$
\begin{array}{l}
\bidef{\tbool.\land}{\var}{\depmethtype{\tbool}{\tbool}{(
	\\\quad\quad \eif{(\eq{\trec}{\ttrue}).\land(\eq{\typevar}{\ttrue})}{\ttrue
	\\\quad\quad}{
     \eif{(\eq{\trec}{\tfalse}).\lor(\eq{\typevar}{\tfalse})}
			{\tfalse\\\quad\quad}{\tbool)}}}
	{\tbool}}
\end{array}
$$
The first two lines of the condition handle the singleton cases, and
the last line is the fallback case.



\subsection{Type Checking and Rewriting}
\label{subsec:type-checking}

Figure~\ref{figure:check-insertion} gives a subset of the rules for
type checking \corelang and rewriting \corelang to insert dynamic checks at
library calls. The remaining rules, which are straightforward, can be
found in Appendix~\ref{sec:appendix}. These rules use two additional
definitions from Figure~\ref{figure:lang}. \emph{Type environments}
$\Gamma$ map variables to base types, and the \emph{class table}
\classtable maps methods to their type signatures. We omit the
construction of class tables, which is standard. We also use disjoint
sets $\userdef$ and
$\builtin$ to refer to the user-defined and library methods,
respectively.

The rules in Figure~\ref{figure:check-insertion} prove judgments of
the form \tcheck{\classtable}{\Gamma}{\expr}{\expr'}{A}, meaning
under type environment $\Gamma$ and class table \classtable, source
expression \expr is rewritten to target expression $\expr'$, which has
type $A$.





Rule~(\cType) is straightforward: any class ID $A$ that is used as a
value is rewritten to itself, and it has type \tuniv. We include this
rule to emphasize that types are values in \corelang.

Rule~(\cappBaseUD) finds the receiver type $A$, then looks up
$A.m$ in the class table. This rule only applies when $A.m$ is user-defined
and thus has a
(standard) method type $\mthtype{A_1}{A_2}$.
Then, as is standard, the rule checks that the argument's type $A_x$
is a subtype of $A_1$, and the type of the whole call is $A_2$. This
rule rewrites the subexpressions $e$ and $e_x$, but it does not itself
insert any new checks, since user-defined methods are statically
checked against their type signatures (rule not shown).

Rule~(\cappBaseBI) is similar to Rule~(\cappBaseUD), except it applies
when the callee is a library method. In this case, the rule inserts a
check to ensure that, at run-time, the library method abides by its
specified type.


Rule (\cappComp) is the crux of \corelang's type checking system. It
applies at a call to a library method $A.m$ that uses a type-level
computation, i.e., with a type signature
$\depmethtype{\expr_{t1}}{A_{t1}}{\expr_{t2}}{A_{t2}}$.  The rule
first type checks and rewrites $\expr_{t1}$ and $\expr_{t2}$ to ensure
they will evaluate to a type (i.e., have type \tuniv). These
expressions may refer to $\typevar$ and $\trec$, which themselves have
type $\tuniv$. The rule then \emph{evaluates} the rewritten
$\expr_{t1}$ and $\expr_{t2}$ using the dynamic semantics mentioned
above to yield types $A_1$ and $A_2$, respectively. Finally, the rule
ensures that the argument $e_x$ has a subtype of $A_1$; sets the
return type of the whole call to $A_2$; and inserts a dynamic check
that the call returns an $A_2$ at runtime. For instance,
the earlier example of the use of logical conjunction would be rewritten
to \echeckmeth{\ttrue}{\vtrue}{\land}{\vtrue}.


There is one additional subtlety in Rule (\cappComp). Recall the
example above that gives a type to $\tbool.\land$. Notice that the
type-level computation itself uses $\tbool.\land$. This could
potentially lead to infinite recursion, where calling $\tbool.\land$
requires checking that $\tbool.\land$ produces a type, which
requires recursively checking that $\tbool.\land$ produces a type
etc.

To avoid this problem, we introduce a function $\unrefine{\classtable}$ that
rewrites class table \classtable to drop all annotations with
type-level expressions. More precisely, any \comptype
$\depmethtype{\expr_1}{A_1}{\expr_2}{A_2}$ is rewritten to
$\mthtype{A_1}{A_2}$. Then type checking type-level computations, in
the fifth and eighth premise of (\cappComp), is done under the
rewritten class table.


Note that, while this prevents the type checking rules from infinitely
recursing, it does not prevent type-level expressions from themselves
diverging. In \corelang, we assume this does not happen, but in
our implementation, we include a simple termination checker that is
effective in practice (\S~\ref{sec:implementation}).

\subsection{Properties of \corelang.}\label{subsec:formalism:properties}
Finally, we prove type soundness for \corelang.
For brevity, we provide only the high-level description of the proof. 
The details can be found in Appendix~\ref{sec:appendix}.

\paragraph{Blame.}
The type system of \corelang does not prevent null-pointer errors,
i.e., \vnil has no methods yet we allow it to appear wherever any
other type of object is expected.
We encode such errors as \textit{blame}.
We also reduce to \textit{blame} when a dynamic check of the form \echeckmeth{A'}{A}{m}{\val}
fails.

\paragraph{Program Checking and \classtable.}

In the Appendix~\ref{sec:appendix} we provide type checking rules
not just for \corelang expressions but also for programs $P$.
These rules are where we actually check user-defined
methods against their types. We also define a notion
of \textit{validity} for a class table \classtable with respect to $P$, which enforces
that \classtable's types for methods and fields match the 
declared types in $P$, and that appropriate subtyping
relationships hold among subclasses. Given a well typed program $P$,
it is straightforward to construct a valid \classtable.

\paragraph{Type Checking Rules.}
In addition to the type checking and rewriting rules of
Figure~\ref{figure:check-insertion}, we define a separate judgment
\teval{\classtable}{\Gamma}{\expr}{A} that is identical to
\tcheck{\classtable}{\Gamma}{\expr}{\expr}{A} except it omits the
rewriting step, i.e., only performs type checking.

We can then prove soundness of the judgment
\teval{\classtable}{\Gamma}{\expr}{\type} using preservation and
progress, and finally prove soundness of the type checking and
rewriting rules as a corollary:

\begin{theorem}[Soundness]\label{theorem:soundness}
For any expressions \expr and \expr', type~$A$, class table \classtable, and program $P$
such that \classtable is valid with respect to $P$,
\textit{if} \tcheck{\classtable}{\emptyset}{\expr}{\expr'}{A} 
\textit{then} $\expr'$ either reduces to a value, reduces to blame, or does not terminate.
\end{theorem}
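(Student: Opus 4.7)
The plan is to follow the standard preservation-plus-progress recipe for the pure typing judgment $\teval{\classtable}{\Gamma}{\expr}{A}$ and then derive the statement of the theorem for the rewriting judgment as a corollary. First I would prove an easy bridging lemma: if $\tcheck{\classtable}{\Gamma}{\expr}{\expr'}{A}$ then $\teval{\classtable}{\Gamma}{\expr'}{A}$. This goes by induction on the rewriting derivation; the only non-trivial cases are (\cappBaseBI) and (\cappComp), both of which introduce the checked-call form $\echeckmeth{A_2}{\expr'}{m}{\expr_x'}$. These are handled by a dedicated typing rule that assigns type $A_2$ to the form whenever the underlying call $\emethcall{\expr'}{m}{\expr_x'}$ is typeable at $A_2$ under the corresponding (possibly unrefined) class table.

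Next I would prove preservation and progress for $\teval{\classtable}{\emptyset}{\expr}{A}$ by induction on the derivation, with preservation stated in the usual form modulo blame: if $\expr$ steps to $\expr''$ then either $\expr''$ is blame or $\teval{\classtable}{\emptyset}{\expr''}{A'}$ for some $A' \leq A$. The boilerplate cases (sequencing, conditionals, equality, \enew{A}, user-defined method dispatch) follow the template of \citet{Ren}, relying on standard substitution, canonical-forms, and class-table validity lemmas, where validity of \classtable with respect to $P$ ensures that each user-defined method body was type checked against a signature compatible with the dynamic type of the receiver. Dispatch on \vnil and a failing checked-call both reduce to blame, discharging those cases immediately. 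Iterating preservation and progress then gives the theorem: any reduction sequence from $\expr'$ either terminates at a value, terminates at blame, or is infinite.

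The main obstacle is rule (\cappComp). At type-checking time the rule computes the return type $A_2$ by substituting $\subst{\typevar}{A_x}\subst{\trec}{A}$ into $\expr_{t2}$ and evaluating via $\typegoesto{\cdot}{\cdot}{\cdot}$; the library method body, however, is opaque to the static type system and need not actually yield a value of type $A_2$. My proof closes this gap by exploiting the inserted dynamic guard: in the preservation case for a reducing checked call $\echeckmeth{A_2}{\val}{m}{\val_x}$ I split on whether the returned value inhabits $A_2$, preserving typing in the positive case and stepping to blame in the negative case, both of which are permitted outcomes. A secondary subtlety is that the recursive premises of (\cappComp) must not themselves re-enter \comptype checking; the function $\unrefine{\classtable}$ that strips \comptypes from the class table ensures the two inner $\tcheck{}{}{}{}{}$ judgments terminate, while termination of the type-level evaluations $\typegoesto{\cdot}{\expr_{t1}'}{A_1}$ and $\typegoesto{\cdot}{\expr_{t2}'}{A_2}$ is presupposed by the formalism and enforced separately by the implementation (\S~\ref{sec:implementation}).
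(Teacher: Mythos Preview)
Your overall decomposition---a bridging lemma from the rewriting judgment to the pure typing judgment, followed by preservation and progress for the latter---matches the paper exactly. The bridging lemma is the paper's Lemma~\ref{lemma:check-tc-eq}, and your treatment of the checked-call case in preservation (split on whether the library return inhabits the guard type, step to blame otherwise) is precisely how the paper's (\reAppBI) rule and its preservation case work. One small mismatch: the paper's typing rule for $\echeckmeth{A_2}{\expr_0}{m}{\expr_1}$ is weaker than you describe---it only requires the receiver to be typeable and the method to be in $\builtin$, not that the underlying call be typeable at $A_2$---but this only makes your bridging lemma easier, not harder.

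There is, however, a genuine gap in your preservation statement. The dynamic semantics of \corelang is not substitution-based; it is the explicit environment-and-stack machine of \citet{Ren}, with configurations $\dyn{\dynenv}{\expr}{\stack}$. Rule (\reAppUD) for user-defined method dispatch does \emph{not} substitute the body inline: it pushes the current context onto the stack and begins evaluating the callee's body under a fresh dynamic environment $[\eself \mapsto \val_r, \var \mapsto \val]$. Symmetrically, (\reRet) pops. Consequently your ``usual form'' of preservation---$\teval{\classtable}{\emptyset}{\expr''}{A'}$ with $A' \leq A$---fails outright at these two rules: after (\reAppUD) the new expression is typed under $[\eself \mapsto A_{rec}, \var \mapsto \type_1]$, not under the caller's environment, and after (\reRet) the type jumps back to that of the surrounding context, which need not be related to the return value's type by subtyping in either direction. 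The paper handles this by threading a \emph{type stack} $\typestack$ alongside the dynamic stack, defining stack subtyping and stack consistency, and proving a generalized preservation invariant (premises (3)--(5) and conclusions (b)--(d) of the appendix theorem) that is pushed and popped in lockstep with the operational stack. You cite ``the template of \citet{Ren}'' but then list user-defined method dispatch among the boilerplate cases; in this semantics it is the non-boilerplate case, and the type-stack machinery is the missing idea needed to make your induction go through.
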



\section{Implementation}
\label{sec:implementation}

We implemented \name{} as an extension to
\rdl, a type checking system for Ruby~\cite{rdl-github,Ren,
  strickland:dls12, ren:oops13}. In total, \name{} comprises
approximately 1,170 lines of code added to \rdl.

RDL's design made it straightforward to add \comptypes. We extended
RDL so that, when type checking method calls, type-level computations
are first type checked to ensure they produce a value of type $\tuniv$
and then are executed to produce concrete types, which are then used
in subsequent type checking. \Comptypes use RDL's contract mechanism
to insert dynamic checks for \comptypes.

\paragraph{Heap Mutation.}

For simplicity, \corelang does not include a heap. By contrast,
\name allows arbitrary Ruby code to appear in \comptypes. This allows
great flexibility, but it means such code might depend on mutable
state that could change between type checking and the execution of a
method call.  For example, in Figure~\ref{fig:db-example}, type-level
code uses the global table \code{RDL.db_schema}. If, after
type checking the method \textsf{available?}, the program (pathologically)
changed the schema of \code{User} to drop the \code{username}
column, then \code{available?} would fail at runtime even
though it had type checked. The dynamic checks discussed in
\S~\ref{sec:overview} and \S~\ref{sec:formalism} are insufficient
to catch this issue, because they only check a method call
against the initial result of evaluating a \comptype;
they do not consider that the same \comptype might yield
a new result at runtime.

To address this issue, \name extends dynamic checks to ensure types remain
the same between type checking and execution.  If a method call is
type checked using a \comptype, then prior to that call
at runtime, \name will reevaluate that same \comptype on the same
inputs. If it evaluates to a different type, \name will raise an
exception to signal a potential type error. An alternative approach
would be to re-check the method under the new type.

Of course, the evaluation of a comp type may itself alter mutable
state.  Currently, CompRDL assumes that
comp type specifications are correct, including any mutable
computations they may perform. If a comp type does have any erroneous effects,
program execution could fail in an unpredictable manner. Other researchers have proposed
safeguards for this issue of effectful contracts by using guarded locations~\cite{Dimoulas2012}
or region based effect systems~\cite{Sekiyama:2017}.
We leave incorporating such safeguards for comp types as future work.
We note, however, that this issue did not arise in any comp types we
used in our experiments.

\paragraph*{Termination of \CompTypes.}

A standard property of type checkers is that they terminate. However,
because \comptypes allow arbitrary Ruby code, \name could potentially
lose this property. To address this issue, \name includes a lightweight
termination checker for \comptypes.

\begin{figure}
\begin{rcodebox}
type :m1, ..., terminates: :+
type :m2, ..., terminates: :+
type :m3, ..., terminates: :-

type Array, :map, ..., terminates: :blockdep  (*\label{line:map-type}*)
type Array, :push, ..., pure: :-

def m1()
  m2() # allowed: m2 terminates (*\label{line:m2-term}*)
  m3() # not allowed: m3 may not terminate (*\label{line:m3-nonterm}*) 
  while ... end # not allowed: looping (*\label{line:m1-loop} *)
  
  array = [1,2,3] # create new array
  array.map { |val| val+1 } # allowed (*\label{line:map-allowed}*)
  array.map { |val| array.push(4) } (*\label{line:map-rejected}*)
  # not allowed: iterator calls impure method push
end

\end{rcodebox}

\caption{Termination Checking with \name.}
\label{fig:termination}
\end{figure}

Figure~\ref{fig:termination} illustrates the ideas behind termination
checking. In \name, methods can be annotated with \emph{termination
  effects} \code{:+}, for methods that always terminate (e.g., \code{m1} and
\code{m2}) and \code{:-} for methods that might diverge (e.g., \code{m3}).
\name allows terminating methods to call other terminating methods
(Line~\ref{line:m2-term}) but not potentially non-terminating methods
(Line~\ref{line:m3-nonterm}). Additionally, terminating methods may
not use loops (Line~\ref{line:m1-loop}). CompRDL assumes
that type-level code does not use recursion, and leave checking
of recursion to future work.


We believe it is reasonable to forbid the use of built-in loop
constructs, and to assume no recursion, because in practice most
iteration in Ruby occurs via methods that iterate over a structure.
For instance, \code{array.map \{~block~\}} returns a new array in
which the \code{block}, a \emph{code block} or lambda, has been
applied to each element of \code{array}. Since arrays are by
definition finite, this call terminates as long as \code{block}
terminates and does not mutate the \code{array}. A similar argument holds
other iterators of \code{Array}, \code{Hash}, etc.

Thus, \name{} checks termination of iterators as follows. Iterator
methods can be annotated with the special termination effect
\code{:blockdep} (Line~\ref{line:map-type}), indicating the method
terminates if its block terminates and is pure. \name{} also includes
\emph{purity effect} annotations indicating whether methods are pure
(\code{:+}) or impure (\code{:-}). A pure method may not write to any
instance variable, class variable, or global variable, or call an
impure method. \name{} determines that a \code{:blockdep} method
terminates as long as its block argument is pure, and otherwise it may
diverge.  Using this approach, \name will allow
Line~\ref{line:map-allowed} but reject reject
Line~\ref{line:map-rejected}.

\paragraph*{Type Mutations and Weak Updates}

Finally, to handle aliasing, our type annotations for \code{Array},
\code{Hash}, and \code{String} need to perform weak updates to type
information when tuple, finite hash, and const string types,
respectively, are mutated. For example, consider the following code:
\begin{rcode}
a = [1, 'foo']; if...then b = a else...end; a[0]='one'
\end{rcode}
Here (ignoring singleton types for simplicity), \code{a} initially has
the type \code{t = [Integer, String]}, where \code{t} is a Ruby
object, specifically an instance of RDL's \code{TupleType} class. At
the join point after the conditional, the type of \code{b} will be a union
of \code{t} and its previous type.

We could potentially forbid the assignment to \code{a[0]} because the
right-hand side does not have the type \code{Integer}. However, this is
likely too restrictive in practice.  Instead, we would like to mutate
\code{t} after the write. However, \code{b} shares this type.  Thus we
perform a \emph{weak update}: after the assignment we mutate \code{t}
to be \code{[Integer $\cup$ String, String]}, to handle the cases when
\code{a} may or may not have been assigned to \code{b}.

For soundness, we need to retroactively assume \code{t} was always this
type. Fortunately, for all tuple, finite hash, and const string types $\tau$, RDL
already records all asserted constraints $\tau' \leq \tau$ and
$\tau \leq \tau'$ to support promotion of tuples, finite hashes, and const strings to types
\code{Array}, \code{Hash}, and \code{String}, respectively~\cite{rdl-github}. We use this
same mechanism to replay previous constraints on these
types whenever they are mutated. For example, if previously we had a
constraint $\alpha \leq [\code{Integer}, \code{String}]$, and subsequently we
mutated the latter type to [\code{Integer}$\cup$\code{String}, \code{String}], we
would ``replay'' the original constraint as
$\alpha \leq [\code{Integer}\cup \code{String}, \code{String}]$.


	\section{Experiments}
\label{sec:experiments}

We evaluated \name{} by writing \comptype annotations for
a number of Ruby core and third party libraries
(\S~\ref{subsec:library-types}) and using these types to type check
real-world Ruby applications (\S~\ref{subsec:benchmarks}). We discuss
the results of type checking these benchmarks,
including the type errors we found in the process
(\S~\ref{subsec:results}). In all, we wrote 586
\comptype annotations for Ruby library methods, used them to type check
132 methods across six Ruby apps, found three bugs in the process,
and used significantly fewer manually inserted type casts than
are needed using RDL.

\subsection{Library Types}
\label{subsec:library-types}

  \begin{table}
	\centering
	\small
	\caption{Library methods with \comptype definitions.}
	\begin{tabular}{|r|r|r|r|}
		\hline
		\thead{Library} & \thead{\CompType\\Definitions} & \thead{Ruby\\LoC} & \thead{Helper\\Methods} \\
		\hline
		\multicolumn{4}{|l|}{\textit{Ruby Core Library}} \\
		\hline
		\class{Array} & 114 & 215 & 15 \\
		\class{Hash} & 48 & 247 & 15 \\
		\class{String} & 114 & 178 & 12 \\
		\class{Float}* & 98 & 12 & 1 \\
		\class{Integer}* & 108 & 12 & 1 \\
		\hline
		\multicolumn{4}{|l|}{\textit{Database DSL}} \\
		\hline
		\class{ActiveRecord} & 77 & 375 & 18 \\
		\class{Sequel} & 27 & 408 & 22 \\
		\hline
		Total & 586 & 1447 & 83 \\
		\hline
		\multicolumn{4}{l}{\footnotesize${}^*$Helper methods for Float and Integer are shared.} \\
	\end{tabular}
	\label{tab:deptype-defn}
\end{table}

Table~\ref{tab:deptype-defn} details the library type annotations we
wrote.

We chose to define \comptypes for these libraries due to
their popularity and because, as discussed in
\S~\ref{sec:overview}, they are amenable to precise typing with
\comptypes. These types were written based on the libraries'
documentation as well as manual testing to ensure type specifications
matched associated method semantics.

\begin{itemize}
\item \textit{Ruby core libraries:} These are libraries that are
  written in C and automatically loaded in all Ruby programs. We
  annotate the methods from the \class{Array}, \class{Hash}, \class{String},
  \class{Integer}, and \class{Float}
  classes.
\item \textit{ActiveRecord:} \class{ActiveRecord} is the most used
  object-relational model (ORM) DSL of the Ruby on Rails web framework.
  We wrote \comptypes for \class{ActiveRecord} database query
  methods.
\item \textit{Sequel:} \code{Sequel} is an alternative database ORM DSL.
It offers some more expressive queries than are available in ActiveRecord.
\end{itemize}

Table~\ref{tab:deptype-defn} lists the number of methods for
which we defined \comptypes in each library and the number
of Ruby lines of code (LoC) implementing the type computation
logic. The LoC count was calculated with \code{sloccount}
\cite{sloc} and does not include the line of the type annotation
itself.

In developing \comptypes for these libraries, we discovered that many
methods have the same type
checking logic. This helped us
write \comptypes for entire libraries using a few common helper
methods. 
In total, we wrote \comptype 
annotations for 586 methods across these libraries, comprising 1447
lines of type-level code and using 83 helper methods.
Once written, these \comptypes can be used to type check
as many of the libraries' clients as we would like, making the effort
of writing them potentially very worthwhile. 


\begin{table*}
	\centering
	\small
	\caption{Type checking results.}
	\begin{tabular}{|r|rrrr|r|rrr|r|}
		\hline
		\thead{Program} & \thead{Meths} & \thead{LoC} &\thead{Extra\\Annots.} & \thead{Casts} & \thead{Casts\\(RDL)} & \thead{Time (s)\\Median $\pm$ SIQR} & \thead{Test Time\\No Chk (s)}&  \thead{Test Time\\w/Chk. (s)}&\thead{Errs} \\
		\hline
		\multicolumn{10}{|l|}{\textit{API client libraries}} \\
		\hline
		\textit{Wikipedia} & 16 & 47 & 3 & 1 & 13 & \benchtime{0.06}{0.00} & \benchtime{6.3}{0.13} & \benchtime{6.32}{0.11} &0 \\ 
		\textit{Twitter} & 3 & 29 & 11 & 3 & 8 & \benchtime{0.02}{0.00} & \benchtime{0.07}{0.00} & \benchtime{0.08}{0.00} &0 \\ 
		\hline
		\multicolumn{10}{|l|}{\textit{Rails Applications}} \\
		\hline
		\textit{Discourse} & 36 & 261 & 32 & 13 & 22 & \benchtime{7.77}{0.39} & \benchtime{80.24}{0.63} & \benchtime{81.04}{0.34}&0 \\ 
		\textit{Huginn} & 7 & 54 & 6 & 3 & 6 & \benchtime{2.46}{0.29} & \benchtime{4.30}{0.21} & \benchtime{4.59}{0.48} &0 \\ 
          \textit{Code.org} & 49
                                                & 530 & 53 & 3 & 68 & \benchtime{0.49}{0.01} & \benchtime{2.49}{0.13}& \benchtime{2.74}{0.02} &1\\ 
		\textit{Journey} & 21 & 419 & 78 & 14 & 59 & \benchtime{4.12}{0.08} & \benchtime{4.52}{0.22}&  \benchtime{4.76}{0.24}& 2\\ 
		\hline
		Total & 132 & 1340 & 183 & 37 & 176 & \benchtime{14.93}{0.77} & \benchtime{97.93}{1.31} & \benchtime{99.53}{1.20} &3\\ 
		\hline
	\end{tabular}
	\label{tab:type check-bench}
\end{table*}

\subsection{Benchmarks}
\label{subsec:benchmarks}
We evaluated \name by type checking methods from two popular
Ruby libraries and four Rails web apps:

\begin{itemize}
\item \textit{Wikipedia Client} \cite{wikipedia} is a Ruby wrapper library for the
  Wikipedia API.
\item \textit{Twitter Gem} \cite{twitter} is a Ruby wrapper library for the Twitter
  API.
\item \textit{Discourse} \cite{discourse} is an open-source discussion platform built
  on Rails. It uses \class{ActiveRecord}.
\item \textit{Huginn} \cite{huginn} is a Rails app for setting up agents
  that monitor the web for events and perform automated
  tasks in response. It uses \class{ActiveRecord}.
\item \textit{Code.org} \cite{cdo} is a Ruby app that powers
  \url{code.org}, a site that encourages people,
  particularly students, to learn programming. It uses a
  combination of \class{ActiveRecord} and \class{Sequel}.
\item \textit{Journey} \cite{journey} is a web application that provides a graphical
  interface to create surveys and collect responses from
  participants. It uses a combination of \class{ActiveRecord} and \class{Sequel}.
\end{itemize}

We selected these benchmarks because they are popular,
well-maintained, and make extensive use of the libraries noted in
\S~\ref{subsec:library-types}. More specifically, the APIs
often work with hashes representing JSON objects
received over HTTP, and the Rails apps rely heavily on database
queries.

Since \name{} performs type checking, we must provide a type annotation
for any method we wish to type check. Our subject programs are very
large, and hence annotating all of the programs' methods is
infeasible. Instead, we focused
on methods for which \comptypes would
be most useful.

In \textit{Wikipedia}, we annotated the entire \code{Page} API. To
simplify type checking slightly, we changed the code to replace string
hash keys with symbols, since RDL's finite hash types do not currently
support string keys. In \textit{Twitter}, we annotated all the methods
of stream API bindings that made use of methods with
\comptypes.

In \textit{Discourse} and \textit{Huginn}, we chose several larger
Rails model classes, such as a \code{User} class that represents
database rows storing user information. In \textit{Code.org} and
\textit{Journey}, we type checked all methods that used \class{Sequel}
to query the database.  Within the selected classes for these four
Rails apps, we annotated a subset of the methods that query the
database using features that \name supports. The features \name does
not currently support include the use of Rails \textit{scopes}, which
are essentially macros for queries, and the use of SQL strings for
methods other than \code{where}.

Finally, because \name performs type checking at runtime (see \S~\ref{subsec:databases}),
we must first load each benchmark before type checking it. We ran the type checker immediately
after loading a program and its associated type annotations.

%

\subsection{Results}
\label{subsec:results}


Table \ref{tab:type check-bench} summarizes our type checking
results. In the first group of columns, we list the number of type
checked methods and the total lines of code (computed with
\code{sloccount}) of these methods. The third column lists the number
of additional annotations we wrote for any global and instance variables
referenced in the method, as well as any methods called that were not
themselves selected for type checking.
The last
column in this group lists the number of type casts we
added. Many of these type casts were to the result of \code{JSON.parse},
which returns a nested \class{Hash}/\class{Array} data structure
depending on its string input. Most of the remaining casts are to
refine types after a conditional test; it may be possible to remove
these casts by adding support for occurrence
typing~\cite{Kent16}.
We further discuss type casts, in particular the reduced type casting
burden afforded by \comptypes, below.


\paragraph{Increased Type Checking Precision.}



Recall from \S~\ref{subsec:overview:hashes}
that \comptypes can potentially
reduce the need for programmer-inserted type casts. The next column
reports how many casts were needed using normal RDL (i.e., no
\comptypes). As shown, approximately 4.75$\times$ fewer casts were
needed when using \comptypes. This reflects the significantly 
increased precision afforded by \comptypes, which greatly reduces
the programmer's annotation burden.


\paragraph*{Performance.}

The next group of columns report performance. First we give the type
checking time as the median and semi-interquartile range (SIQR) of 11
runs on a 2017 MacBook Pro with a 2.3GHz i5 processor and 8GB RAM.  In
total, we type checked 132 methods in approximately 15 seconds, which
we believe to be reasonable. 
\textit{Discourse} took most of the total time (8 out of 15
seconds). The reason turned out to be a quirk of \textit{Discourse}'s
design: it creates a large number of methods on-the-fly when certain
constants are accessed. Type checking accessed those constants, hence
the method creation was included in the type checking time.



The next two columns show the performance overhead of the dynamic
checks inserted by \name. We selected a subset of each app's test
suite that directly tested the type checked methods, and ran
these tests without (``No Chk'') and with (``w/Chk'') the dynamic
checks. In aggregate (last row), checks add about 1.6\%
overhead, which is minimal.


\paragraph*{Errors Found.}

Finally, the last column lists the number of errors found in each
program. We were somewhat surprised to find any errors in large,
well-tested applications. We found three errors. In \textit{Code.org},
the \code{current_user} method was documented as returning a
\class{User}. We wrote a matching \name annotation, and \name
found that the returned expression---whose typing involved a
\comptype---has a hash type instead.  We notified the
\textit{Code.org} developers, and they acknowledged that this was an
error in the method documentation and made a fix.


 
In \textit{Journey}, \name found two errors. First, it
found a method
that referenced an undefined constant \code{Field}.  We notified the
developers, who fixed the bug by changing the constant to
\code{Question::Field}. This bug had arisen due to namespace changes.
Second, it found a method
that included a call with an argument \code{\{:action => prompt,
  ...\}} which is a hash mapping key \code{:action} to
\code{prompt}. The value \code{prompt} is supposed to be a string or
symbol, but as it has neither quotes nor begins with a colon, it is
actually a call to the \code{prompt} method, which returns an
array. The developers confirmed this bug.

When type checking the aforementioned methods in RDL (i.e., without comp
types), two out of three of the bugs are hidden by other type errors
which are actually false positives. These errors can be removed by
adding four type casts, which would then allow us to catch the true errors.
With CompRDL, however, we do not need any casts to find the errors.

\section{Related Work}
\label{sec:related-work}

\paragraph{Types For Dynamic Languages.}

There is a large body of research on adding static type systems to dynamic languages,
including Ruby~\cite{Furr:2009,ren:oops13,Ren}, Racket~\cite{Tobin-Hochstadt2006, Tobin-Hochstadt2008},
JavaScript~\cite{Anderson2005, Thiemann2005, Lerner2013}, and Python~\cite{Aycock2000,Ancona:2007}. 
To the best of our knowledge, this research does not use type-level computations.

Dependent typing systems for dynamic languages have been explored as well.
\citet{Ou:2004} formally model type-level computation along with effects for a dynamic language.
Other projects have sought to bring dependent types to existing dynamic languages, primarily in the form of
refinement types~\citep{Freeman:1991}, which are base types that are refined with expressive logical predicates.
Refinement types have been applied to Ruby~\cite{Kazerounian}, Racket~\cite{Kent16}, and
JavaScript~\cite{Chugh:2012,Vekris16}.
In contrast to \name, these systems focus on type checking methods which themselves have dependent types.
On the other hand, \name uses type-level computations only for non-type checked library methods,
allowing us to avoid checking \comptypes for equality or subtyping (\S~\ref{subsec:discussion}).
While sacrificing some expressiveness, this makes \name especially practical for real-world programs.

Turnstile~\cite{Chang:2017} is a metalanguage, hosted in Racket, for creating typed embedded languages.
It lets an embedded DSL author write their DSL’s type system using the host language macro system. 
There is some similarity to \name, where \comptypes manipulate standard RDL types. 
However, \name types are not executed as macros (which do not exist in Ruby), 
but rather in standard Ruby so they have full access to the environment, e.g., so 
the \textsf{joins} type signature can look up the DB schema.

\paragraph{Types For Database Queries.}
There have been a number of prior efforts to check the type safety of database queries. 
All of these target statically typed languages, an
important distinction from \name.

\citet{Chlipala:2010} presents Ur, a web-specific functional programming language. Ur uses type-level computations
over record types~\cite{Remy:1989} to type check programs that construct and run SQL queries.
Indeed, \name similarly uses type-level computations over finite hash types
(analogous to record types) to type check queries.
To the best of our knowledge, Ur focuses on computations over
records. In contrast,
\name supports arbitrary type-level computations targeting unchecked library methods, 
making \comptypes more easily extensible to checking new properties and
new libraries. As discussed in \S~\ref{sec:overview}, for example,
\comptypes can not only compute the schema of a joined table, but also check properties
like two joined tables having a declared Rails association. Further, \comptypes can
be usefully applied to many libraries beyond database queries (\S~\ref{sec:experiments}).

Similar to Ur, \citet{Gordon:ECOOP11} makes use of record types over embedded SQL tables. Using SMT-checked refinement types,
they can statically verify expressive data integrity constraints, such as the uniqueness of
primary keys in a table and the validation of data inserted into a table. In addition to the contrast
we draw with Ur regarding extensibility of types,
to the best of our knowledge,
this work does not include more intricate queries like  joins, which are supported in \name.

\paragraph*{New Languages for Database Queries.}

Domain-specific languages have long been used to write programs with
correct-by-construction, type safe queries.  \citet{Leijen1999} implement Haskell/DB, 
an embedded DSL that dynamically generates SQL in Haskell.  
\citet{Karakoidas:2015} introduce J\%, a Java extension for
embedding DSLs into Java in an extensible, type-safe, and
syntax-checked way.  \citet{Fowler:2013} use dependent types in
the language Idris to enforce safety protocols associated with common
web program features including database queries written in a DSL.

Language-integrated query is featured in languages like LINQ~\cite{Meijer:2006} and Links~\cite{Cooper:2006, Cheney:2014}.
This approach allows programmers to write database queries directly
within a statically-typed, general purpose language.

In contrast to new DSLs and language-integrated query, our focus in on bringing
type safety to an existing language and framework rather than
developing a new one.

\paragraph{Dependent Types.}

Traditional dependent type systems are exemplified by languages such as Coq~\cite{Pierce:SFold}, Agda~\cite{Norell2009}, and F*~\cite{Swamy:2016}. These languages provide powerful type
systems that allow programmers to prove expressive properties. However, such expressive types
 may be too heavyweight for a dynamic language like Ruby.
 As discussed in~\S~\ref{subsec:discussion}, 
our work has focused on applying a limited form of dependent types, 
where types depend on argument types and not arbitrary program values, 
resulting in a system that is practical for real-world Ruby programs.

Haskell allows for light dependent typing using the combination of 
singleton types~\cite{Eisenberg:2012} and type families~\cite{Chakravarty:2005}. 
\name's singleton types are similar to Haskell's, i.e., both lifting expressions to 
types, and \comptypes are analogous to anonymous type families. 
However, unlike Haskell, \name supports runtime evaluation during type checking, 
and thus does not require user-provided proofs. 

Scala supports \emph{path dependent types}, 
a limited form of type/term dependency in which types can depend on variables, 
but, as of Scala version 2, does not allow dependency on general terms~\cite{Amin:215280}. 
This allows for reasoning about database queries.
For example, the Scala library \textit{Slick}~\cite{slick}, much like our approach, 
allows users to write database queries in a domain specific language 
(a lifted embedding) and uses the query's AST to 
type check the query using Scala's path dependent types. 
Unlike \name, Scala's path dependent types 
do not allow the execution of the full host language during type computations.

\section{Conclusion}
\label{sec:conclusion}

We presented \name, a system for adding type signatures with
type-level computations, which we refer to as \comptypes, to Ruby
library methods. \name makes it possible to write \comptypes for
database queries, enabling us to type check such queries precisely.
\Comptype signatures can also be used for libraries over heterogeneous
hashes and arrays, and to treat strings as immutable when possible.
The increased precision of \comptypes can reduce the need for
manually inserted type casts, thereby reducing the programmer's burden
when type checking.
Since \comptype-annotated method bodies are not
themselves type checked, \name inserts run-time checks to ensure those
methods return their computed types.  We formalized \name as a core
 language \corelang and proved its type system 
sound.

We implemented \name on top of RDL, an existing type system for
Ruby. In addition to the features of \corelang, our implementation
includes run-time checks to ensure \comptypes that depend
on mutable state yield consistent types. Our implementation also
includes a termination checker for type-level code, and the type
signatures we developed perform weak updates to type certain mutable
methods.

Finally, we used \name to write \comptypes for several Ruby libraries
and two database query DSLs. Using these type signatures, we were able
to type check six popular Ruby apps and APIs, in the process
discovering three errors in our subject programs. We also found
that type checking with \comptypes required 4.75$\times$ fewer type
casts, due to the increased precision.  Thus, we believe that \name
represents a practical approach to precisely type checking
programs written in dynamic languages.

\begin{acks}                            
  Thanks to the anonymous reviewers for their helpful comments.
This research was supported in part by NSF CCF-1518844,
CCF-1846350, DGE-1322106, and Comunidad de Madrid 
as part of the program S2018/TCS-4339 (BLOQUES-CM) 
co-funded by EIE Funds of the European Union.
\end{acks}

\balance
\bibliography{arxiv-bibfile}

\clearpage
\appendix
\section{Appendix}
\label{sec:appendix}

This section contains the full definitions, static and dynamic semantics, and proof of soundness
for \corelang.

Figure~\ref{figure:lang-appendix} once again presents the syntax of \corelang, as well as some new auxiliary
definitions to be used for defining the semantics and proving soundness. Here we see for the first time
object instances \vinst{A}, which denote an instance of a class $A$.
The dynamic semantics rules are shown in Figure~\ref{figure:dyn-sem-appendix}. Most of the rules are standard. The  slightly more intricate rule is (\reContext), which takes a step 
within a subexpression, and it contains premises that differentiate it from (\reAppUD), the other context cases, and ensure that the context \context is the largest possible context for purposes of disambiguation.
Figure~\ref{figure:check-insertion-appendix} once again defines the dynamic check insertion rules.
Finally, Figure~\ref{figure:type-checking-appendix} defines the type checking rules, which are largely 
a simplification of the check insertion rules; it is for these type checking rules that we prove preservation
and progress. We prove soundness of the check insertion rules as a corollary of the soundness of the 
type checking rules.

\begin{figure*}[t!]
	\centering
	$$
	\begin{array}{rrcl}
	\texttt{values} \quad
	& \val
	& ::= & \vnil \spmid \vinst{A} \spmid \type \spmid \vtrue \spmid \vfalse
	\\
	\texttt{expressions} \quad
	& \expr
	& ::= & \val \spmid \var \spmid \eself \spmid \trec \spmid \eseq{\expr}{\expr} \\
	&     & \spmid & \enew{A} \spmid \eif{\expr}{\expr}{\expr} \spmid \eq{\expr}{\expr}\\
	&     & \spmid & \emethcall{\expr}{m}{\expr} \spmid \echeckmeth{\type}{\expr}{m}{\expr} 
	
	\\
	\texttt{method types} \quad
	& \methtype
	& ::= & \mthtype{\type}{\type} \\
	\texttt{TLC method types} \quad
	& \dmethtype
	& ::= & \methtype \spmid \depmethtype{\expr}{\type}{\expr}{\type}
	\\
	\textit{programs} \quad
	& \prog
	& ::= & \pdef{A.m}{\var}{\methtype}{\expr} \spmid \bidef{A.m}{\var}{\dmethtype} \spmid \prog ; \prog
	
	\end{array}
	$$
	
	$x\in\textrm{var IDs}$, $m\in\textrm{meth IDs}$, $A\in\textrm{class IDs}$
	
	$$
	\begin{array}{rrcl}
	\texttt{dyn env} \quad
	& \dynenv
	& : & \textnormal{var ids}\rightarrow \textnormal{values}
	\\
	\texttt{contexts} \quad
	& \context
	& ::= & \square  \spmid \emethcall{C}{m}{\expr} \spmid \emethcall{\val}{m}{\context} \spmid \echeckmeth{\type}{\context}{m}{\expr} \spmid \echeckmeth{\type}{\val}{m}{\context} \spmid \eseq{\context}{\expr} \\
	&     & \spmid & \eif{\context}{\expr}{\expr}  \spmid \eq{\context}{\expr} \spmid \eq{\val}{\context}
	\\
	\texttt{stack} \quad
	& \stack
	& ::= & \cdot \spmid \stacktop{(\dynenv,\context)}{S}
	\\
	\texttt{type stack} \quad
	& \typestack
	& ::= & \cdot \spmid \stacktop{\tstackelement{\Gamma}{\type}{\type}}{\typestack}
	\\
	\texttt{typ env} \quad
	& \Gamma, \Delta
	& : & \textnormal{var ids}\rightarrow \textnormal{base types}
	\\
	\texttt{class table} \quad
	& \classtable
	& : & \textnormal{class ids}\rightarrow \textnormal{meth ids}\rightarrow \textnormal{types}
	\\
	\texttt{objects} \quad
	& \objrecord
	& : & \textnormal{objects}
	\\
	\texttt{object instance} \quad
	& \vinst{\cdot}
	& : & A\rightarrow\objrecord
	\\
	\\
	\texttt{method sets} \quad
	& \userdef
	& : & \textnormal{set of user-defined methods}
	\\
	& \builtin
	& : & \textnormal{set of library methods}
	\\
	& & & \textnormal{where } \builtin\cap\userdef = \emptyset
	\end{array}
	$$
	\tnil, \tobj, \tbool, \ttrue, \tfalse, and \tuniv are all presumed to be class IDs $A$.
	Subtyping is defined as \subtype{\tnil}{A}, \subtype{A}{A}, and \subtype{A}{A\sqcup A'} for all $A, A'$. $A\sqcup A'$ is the least upper bound of types $A$ and $A'$. 
	
	\caption{\corelang and auxiliary definitions.}
	\label{figure:lang-appendix}
	
\end{figure*}

\begin{figure*}
	\centering
	\judgementHead{Dynamic semantics}{\eeval{\dyn{\dynenv}{\expr}{\stack}}{\dyn{\dynenv'}{\expr'}{\stack'}}}
	
	\begin{tabular}{lc}
		(\reVar)     & \eeval{\dyn{\dynenv}{\var}{\stack}}{\dyn{\dynenv}{\dynenv(\var)}{\stack}} \\
		(\reSelf)    & \eeval{\dyn{\dynenv}{\eself}{\stack}}{\dyn{\dynenv}{\dynenv(\eself)}{\stack}} \\
		(\reTSelf)    & \eeval{\dyn{\dynenv}{\trec}{\stack}}{\dyn{\dynenv}{\dynenv(\trec)}{\stack}} \\
		(\reSeq)     & \eeval{\dyn{\dynenv}{\val;\expr}{\stack}}{\dyn{\dynenv}{\expr}{\stack}} \\
		(\reNew)     & \eeval{\dyn{\dynenv}{\enew{A}}{\stack}}{\dyn{\dynenv}{\vinst{A}}{\stack}} \\
		(\reEIfTrue)  & \eeval{\dyn{\dynenv}{\eif{\val}{\expr_2}{\expr_3}}{\stack}}{\dyn{\dynenv}{\expr_2}{\stack}} \\ 
		& if \val is not \vnil or \vfalse \\
		(\reEIfFalse) & \eeval{\dyn{\dynenv}{\eif{\val}{\expr_2}{\expr_3}}{\stack}}{\dyn{\dynenv}{\expr_3}{\stack}} \\
		& if $\val = \vnil$ or $\val = \vfalse$ \\
		(\reEEqTrue) & \eeval{\dyn{\dynenv}{\eq{\val_1}{\val_2}}{\stack}}{\dyn{\dynenv}{\vtrue}{\stack}} \\ 
		  & if $\val_1$ and $\val_2$ are equivalent \\
		(\reEEqFalse) & \eeval{\dyn{\dynenv}{\eq{\val_1}{\val_2}}{\stack}}{\dyn{\dynenv}{\vfalse}{\stack}}\\
		& if $\val_1$ and $\val_2$ are not equivalent\\
		(\reAppUD) & \begin{tabular}[t]{@{}c@{}}
			\eeval{\dyn{\dynenv}{\context[\val_r.m(\val)]}{\stack}}{\dyn{[\eself \mapsto \val_r, \var \mapsto \val]}{\expr}{\stacktop{\dstackelement{\dynenv}{\context}}{\stack}}} \\
			if $\typeof(\val_r) = A$
			 and $A.m\in\userdef$ and \defof{A.m}{\var}{e}
		\end{tabular} \\
		(\reAppBI) & \begin{tabular}[t]{@{}c@{}}
			\eeval{\dyn{\dynenv}{\echeckmeth{\type}{\val_r}{m}{\val}}{\stack}}{\dyn{\dynenv}{\val'}{\stack}} \\
			if $\typeof(\val_r) = A$
			and $A.m\in\builtin$ and $\val' = \call{A.m}{\val_r}{\val}$ and \subtype{\typeof(\val')}{\type}
		\end{tabular} \\
		
		(\reRet)     & \eeval{\dyn{\dynenv'}{\val}{\stacktop{\dstackelement{\dynenv}{\context}}{\stack}}}{\dyn{\dynenv}{\context[\val]}{\stack}}
	\end{tabular}
	\begin{flushleft}(\reContext)\end{flushleft}
	$$
	\inference{\eeval{\dyn{\dynenv}{\expr}{\stack}}{\dyn{\dynenv'}{\expr'}{\stack'}} 
		&& \nexists \val, \val_r, A, m. (\expr = \val_r.m(\val)\land \typeof(\val_r) = A \land A.m\in\userdef) 
		\\ \nexists \val. \expr = \val 
		&& \nexists \context', e''. \expr = \context'[\expr'']}
	{\eeval{\dyn{\dynenv}{\context[\expr]}{\stack}}{\dyn{\dynenv'}{\context[\expr']}{\stack'}}}
	$$
	
	\begin{flushleft}
		
		where \defof{A.m}{\var}{\expr} if there exists a definition \pdef{A.m}{\var}{\methtype}{\expr}. Additionally, \call{A.m}{\val}{\val} is our way of dispatching a library method, where $A.m$ identifies the method, the first \val is the receiver of the method call, and the second \val is the argument of the method call. It returns a value $\val_r$, the value returned by the method call. Finally, we use \typeof(\val), where \typeof(\vnil) = \tnil, \typeof(\type) = \tuniv, \typeof(\vtrue) = \ttrue, \typeof(\vfalse) = \tfalse, and \typeof(\vinst{A}) = $A$.
		
	\end{flushleft}
	\caption{Dynamic semantics of \corelang}
	\label{figure:dyn-sem-appendix}
\end{figure*}

\begin{figure*}
	\centering
	\judgementHead{Type Checking and Check Insertion Rules.}{\tcheck{\classtable}{\Gamma}{\expr}{\expr}{\type}}
	
	$$
	\inference{}{\tcheck{\classtable}{\Gamma}{\vnil}{\vnil}{\tnil}}[\cNil]
	\quad
	\inference{}{\tcheck{\classtable}{\Gamma}{\enew{A}}{\enew{A}}{A}}[\cNew]
	\quad
	\inference{}{\tcheck{\classtable}{\Gamma}{\type}{\type}{\tuniv}}[\cType]
	$$
	
	$$
	\inference{}{\tcheck{\classtable}{\Gamma}{\vtrue}{\vtrue}{\ttrue}}[\cTrue]
	\quad
	\inference{}{\tcheck{\classtable}{\Gamma}{\vfalse}{\vfalse}{\tfalse}}[\cFalse]
	\quad
		\inference{}{\tcheck{\classtable}{\Gamma}{\vinst{A}}{\vinst{A}}{A}}[\cObj]
	$$
	
	$$
	\inference{\Gamma(\var) = \type}{\tcheck{\classtable}{\Gamma}{\var}{\var}{\type}}[\cVar]
	\quad
	\inference{\tcheck{\classtable}{\Gamma}{\expr_1}{\expr_1'}{\type_1} \\
		\tcheck{\classtable}{\Gamma_1}{\expr_2}{\expr_2'}{\type_2}}
	{\tcheck{\classtable}{\Gamma}{\eq{\expr_1}{\expr_2}}{\eq{\expr_1'}{\expr_2'}}{\tbool}}[\cEq]
	\quad
		\inference{\tcheck{\classtable}{\Gamma}{\expr_1}{\expr_1'}{\type_1} \\
		\tcheck{\classtable}{\Gamma_1}{\expr_2}{\expr_2'}{\type_2}}
	{\tcheck{\classtable}{\Gamma}{\eseq{\expr_1}{\expr_2}}{\eseq{\expr_1'}{\expr_2'}}{\type_2}}[\cSeq]
	$$	
	
	$$
	\inference{
		\tcheck{\classtable}{\Gamma}{\expr_1}{\expr_1'}{\type_1} &&
		\tcheck{\classtable}{\Gamma}{\expr_2}{\expr_2'}{\type_2} &&
		\tcheck{\classtable}{\Gamma}{\expr_3}{\expr_3'}{\type_3}}
	{\tcheck{\classtable}{\Gamma}{\eif{\expr_1}{\expr_2}{\expr_3}}{\eif{\expr_1'}{\expr_2'}{\expr_3'}}{(\type_2\sqcup\type_3)}}[\cIf]
	\quad
	\inference{
		\cteval{\classtable}{A.m}{\mthtype{\type_1}{\type_2}} &&
		A.m\in\userdef \\
		\subtype{\type_x}{\type_1} \\
		\tcheck{\classtable}{\Gamma}{\expr}{\expr'}{A} &&
		\tcheck{\classtable}{\Gamma}{\expr_x}{\expr_x'}{\type_x}
	}
	{\tcheck{\classtable}{\Gamma}{\emethcall{\expr}{m}{\expr_x}}{\emethcall{\expr'}{m}{\expr_x'}}{\type_2}}[\cappBaseUD]
	$$
	
	$$
	\inference{
		\cteval{\classtable}{A.m}{\mthtype{\type_1}{\type_2}} \\
		A.m\in\builtin &&
		\subtype{\type_x}{\type_1} \\
		\tcheck{\classtable}{\Gamma}{\expr}{\expr'}{A} \\
		\tcheck{\classtable}{\Gamma}{\expr_x}{\expr_x'}{\type_x}
	}
	{\tcheck{\classtable}{\Gamma}{\emethcall{\expr}{m}{\expr_x}}{\echeckmeth{\type_2}{\expr'}{m}{\expr_x'}}{\type_2}}[\cappBaseBI]
	\quad
	\inference{
		\cteval{\classtable}{A.m}\depmethtype{\expr_{t1}}{\type_{t1}}{\expr_{t2}}{\type_{t2}} &&
		A.m\in \builtin \\
		\tcheck{\classtable}{\Gamma}{\expr}{\expr'}{A} &&
		\tcheck{\classtable}{\Gamma}{\expr_x}{\expr_x'}{\type_x} \\
		\tcheck{\unrefine{\classtable}}{\envbind{\var}{\tuniv},\envbind{\trec}{\tuniv}}{\expr_{t1}}{\expr_{t1}'}{\tuniv} \\
		\eevalmulti{\dyn{\subst{\var}{\type}\subst{\trec}{A}}{\expr_{t1}'}{\cdot}}{\dyn{\dynenv_1}{\type_1}{\cdot}} && 
		\subtype{\type_x}{\type_1} \\
		\tcheck{\unrefine{\classtable}}{\envbind{\var}{\tuniv},\envbind{\trec}{\tuniv}}{\expr_{t2}}{\expr_{t2}'}{\tuniv} \\
		\eevalmulti{\dyn{\subst{\var}{\type}\subst{\trec}{A}}{\expr_{t2}'}{\cdot}}{\dyn{\dynenv_2}{\type_2}{\cdot}} \\
	}
	{\tcheck{\classtable}{\Gamma}{\emethcall{\expr}{m}{\expr_x}}{\echeckmeth{\type_2}{\expr'}{m}{\expr_x'}}{\type_2}}[\cappComp]
	$$ 
	
	\caption{Type checking and check insertion rules for \corelang.}
	\label{figure:check-insertion-appendix}
\end{figure*}

\begin{figure*}
	\centering
	\judgementHead{Type checking rules}{\teval{\classtable}{\Gamma}{\expr}{\type}}
	
	$$
	\inference{}{\teval{\classtable}{\Gamma}{\vnil}{\tnil}}[\tNil]
	\quad
	\inference{}{\teval{\classtable}{\Gamma}{\vinst{A}}{A}}[\tObj]
	\quad
	\inference{\Gamma(\eself)=\type}{\teval{\classtable}{\Gamma}{\eself}{\type}}[\tSelf]
	\quad
	\inference{}{\teval{\classtable}{\Gamma}{\vtrue}{\ttrue}}[\tTrue]	
	$$
	\trspace
	$$
	\inference{}{\teval{\classtable}{\Gamma}{\vfalse}{\tfalse}}[\tFalse]
	\quad
	\inference{}{\teval{\classtable}{\Gamma}{\type}{\tuniv}}[\tType]
	\quad
	\inference{\Gamma(\var) = \type}{\teval{\classtable}{\Gamma}{\var}{\type}}[\tVar]
	$$
	\trspace
	$$
	\inference{\teval{\classtable}{\Gamma}{\expr_1}{\type_1} \\
		\teval{\classtable}{\Gamma_1}{\expr_2}{\type_2}}
	{\teval{\classtable}{\Gamma}{\eq{\expr_1}{\expr_2}}{\tbool}}[\tEq]
	\quad
	\inference{\Gamma(\trec)=\type}{\teval{\classtable}{\Gamma}{\trec}{\type}}[\tTSelf]
	\quad
	\inference{\teval{\classtable}{\Gamma}{\expr_1}{\type_1} \\
		\teval{\classtable}{\Gamma_1}{\expr_2}{\type_2}}
	{\teval{\classtable}{\Gamma}{\eseq{\expr_1}{\expr_2}}{\type_2}}[\tSeq]
	\quad
	\inference{}{\teval{\classtable}{\Gamma}{\enew{A}}{A}}[\tNew]
	$$
	\trspace
	$$
	\inference{\teval{\classtable}{\Gamma}{\expr_1}{\type_1} &&
	\teval{\classtable}{\Gamma}{\expr_2}{\type_2} \\
	\teval{\classtable}{\Gamma}{\expr_3}{\type_3}}
{\teval{\classtable}{\Gamma}{\eif{\expr_1}{\expr_2}{\expr_3}}{(\type_2\sqcup\type_3)}}[\tIf]
	\inference{\teval{\classtable}{\Gamma}{\expr_0}{A} &&
		A.m\in\userdef \\
		\teval{\classtable}{\Gamma}{\expr_1}{\type} \\
		\cteval{\classtable}{A.m}{\mthtype{\type_1}{\type_2}} &&
		\subtype{\type}{\type_1} \\
	}
	{\teval{\classtable}{\Gamma}{\emethcall{\expr_0}{m}{\expr_1}}{\type_2}}[\tappBase]
	\quad
	\inference{\teval{\classtable}{\Gamma}{\expr_0}{A} \\
    A.m\in\builtin \\
}
{\teval{\classtable}{\Gamma}{\echeckmeth{\type}{\expr_0}{m}{\expr_1}}{\type}}[\tappComp]
	$$
	
	\caption{Type checking rules for \corelang.}
	\label{figure:type-checking-appendix}
\end{figure*}

\subsection{Soundness}



We prove soundness of the type system of \corelang by first proving preservation and then progress.
The proof of preservation is the more involved step here, and it requires a number of preliminary definitions.
First, we define a notion of consistency between the type and dynamic environments:

\begin{definition}[Environmental consistency]
\textit{Type environment $\Gamma$ is consistent with dynamic environment \dynenv, written \consistent{\Gamma}{\dynenv}, if for all variables \var, $\var\in dom(\dynenv)$ if and only if $\var\in dom(\Gamma)$, and for all $x\in dom(\dynenv)$ there exists \type\ such that} \teval{\classtable}{\Gamma}{\dynenv(x)}{\type} and \subtype{\type}{\Gamma(x)}.
\end{definition}

We will use the notation \typeof(\val), where \typeof(\vnil) = \tnil, 
\typeof(\vtrue) = \ttrue,
\typeof(\vfalse) = \tfalse,
$\typeof(\vinst{A}) = A$, and \typeof(\type) = \tuniv, for any $A$. 

\textit{On blame:} Our type system does not prevent invoking a method on a value that is \vnil. Additionally, runtime evaluation can fail if an inserted dynamic
check fails.
In order to retain soundness of our system, we add dynamic semantics rules which step to \textit{blame} in these cases, where \teval{\classtable}{\Gamma}{blame}{\tnil} for any $\Gamma$. Additionally, we add rules which take a step to $blame$ whenever a subexpression takes a step to $blame$. We omit the rules here for brevity.

Because we make use of a stack in our dynamic semantics, a standard type preservation theorem
which says that we always step to an expression which has the same type (or subtype) will not suffice.
Rules (\reAppUD) and (\reRet) push and pop from the stack. In these cases, an expression $\expr$ may have an entirely different type than the expression that it steps to, $\expr'$. To account for this we incorporate a notion of a \textit{type stack} \typestack\ to mirror the runtime stack, which is defined Figure~\ref{figure:lang-appendix}. As an example, suppose we want to apply preservation to $\context[\val_1.m(\val_2)]$. The type checking judgment is \teval{\classtable}{\Gamma}{\context[\val_1.m(\val_2)]}{\type'}. Because the dynamic semantics rule (\reAppUD) pushes the current environment and context onto the stack, we will push the current typing judgment on to the type stack. Specifically, we will push an element of the form \tstackelement{\Gamma}{\type}{\type'}, where $\Gamma$ is the environment of the current typing judgment; $\type'$ is the type of the surrounding context; and \type is the type of expression $\val_1.m(\val_2)$, i.e., the type that the method must return. 

With this type stack, we can now define what it means for a type to be
a subtype of the type stack, which is the crucial preservation invariant we will prove:

\begin{definition}[Stack subtyping]
\subtype{\type_0}{\stacktop{\tstackelement{\Gamma}{\type}{\type'}}{\typestack}} if \subtype{\type_0}{\type}.
\end{definition}

\begin{definition}[Stack consistency]
\textit{Type stack element }\tstackelement{\Gamma}{\type}{\type'} is consistent with dynamic stack element \dstackelement{\dynenv}{\context}, written \consistent{\tstackelement{\Gamma}{\type}{\type'}}{\dstackelement{\dynenv}{\context}}, if \consistent{\Gamma}{\dynenv} and \teval{\classtable}{\Gamma\subst{\square}{\type}}{\context}{\type'} (Here we abuse notation and treat $\square$ as if its a variable.)

\textit{Type stack} \typestack  is consistent with dynamic stack \stack, written \consistent{\typestack}{\stack}, is defined inductively as
\begin{itemize}
\item[1.] \consistent{\cdot}{\cdot}
\item[2.] \consistent{\stacktop{\tstackelement{\Gamma}{\type}{\type'}}{\typestack}}{\stacktop{\dstackelement{\dynenv}{\context}}{\stack}} if
\begin{itemize}
\item[(a)] \consistent{\tstackelement{\Gamma}{\type}{\type'}}{\dstackelement{\dynenv}{\context}}
\item[(b)] \consistent{\typestack}{\stack}
\item[(c)] \subtype{\type'}{\typestack} if $\typestack\neq\cdot$
\end{itemize}
\end{itemize}
\end{definition}

We will also make use of a notion of class table validity, which tells us that a class table maps fields and methods
to the "correct" types:

\begin{definition}[Class table validity]\label{def:class-table-validity}
Let $A.m$ be an arbitrary method. We say \valid(\classtable) if
\begin{itemize}
\item[1.] if $A.m\in\userdef$, then $A.m\in dom(\classtable)$, \cteval{\classtable}{A.m}{\mthtype{\type_1}{\type_2}} for some $\type_1, \type_2$, and there exists a single method definition \pdef{A.m}{\var}{\mthtype{\type_1}{\type_2}}{\expr} such that \teval{\classtable}{[\eself \mapsto A, \var \mapsto \type_1]}{\expr}{\type_2'} for some $\type_2'$ where \subtype{\type_2'}{\type_2}. 

\item[2.] if $A.m\in\builtin$, then $A.m\in dom(\classtable)$, \cteval{\classtable}{A.m}{\methtype} for some $\type_m$, and there exists a single method declaration \bidef{A.m}{\var}{\methtype} such that \methtype = \mthtype{\type_1}{\type_2} or \methtype = \depmethtype{\expr_{t1}}{\type_{1}}{\expr_{t1}}{\type_{2}} for some $\type_1, \type_2, \expr_{t1}, \expr_{t2}$. 

\item[3.] For all $A'$ such that \subtype{A'}{A}, if $\classtable(A'.m) = \mthtype{\type_1'}{\type_2'}$ and 
$\classtable(A.m) = \mthtype{\type_1}{\type_2}$ then \subtype{\type_1}{\type_1'} and \subtype{\type_2'}{\type_2}.

\end{itemize}
\end{definition}

See section~\ref{subsec:prog-type-check} for the programming type checking rules, and a discussion of
how to construct and check the validity of a class table. 

Finally, we make use of the following lemmas in our proofs of soundness:

\begin{lemma}[Contextual substitution]\label{lemma:contextual-substitution}
\textit{If}
\[
\inference{\teval{\classtable}{\Gamma}{\expr}{\type'}\\ \vdots}
{\teval{\classtable}{\Gamma_C}{\context[\expr]}{\type_C}},
\]
\textit{then} \teval{\classtable}{\Gamma_C\subst{\square}{\type'}}{C}{\type_C}.
\end{lemma}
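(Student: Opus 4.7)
The plan is to proceed by structural induction on the context $\context$, transforming the typing derivation for $\context[\expr]$ into a derivation for $\context$ under the extended environment $\Gamma_C\subst{\square}{\type'}$ by replacing the displayed sub-derivation for $\expr$ with an application of \tVar to the hole.

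For the base case $\context = \square$, we have $\context[\expr] = \expr$, so the derivation of $\teval{\classtable}{\Gamma_C}{\context[\expr]}{\type_C}$ is identical to the premise $\teval{\classtable}{\Gamma}{\expr}{\type'}$, forcing $\Gamma = \Gamma_C$ and $\type_C = \type'$. The goal $\teval{\classtable}{\Gamma_C\subst{\square}{\type'}}{\square}{\type_C}$ then follows by \tVar, since $(\Gamma_C\subst{\square}{\type'})(\square) = \type' = \type_C$.

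For the inductive step, I would case-split on the shape of $\context$ as given in Figure~\ref{figure:lang-appendix}. In each case $\context$ decomposes into a single immediate subcontext $\context'$ together with some already-evaluated values and other unevaluated subexpressions. The root rule of the derivation of $\teval{\classtable}{\Gamma_C}{\context[\expr]}{\type_C}$ must be the unique typing rule matching that syntactic shape, e.g., \tSeq for $\eseq{\context'}{\expr_2}$, \tIf for $\eif{\context'}{\expr_2}{\expr_3}$, \tEq for either $\eq{\context'}{\expr_2}$ or $\eq{\val}{\context'}$, and \tappBase or \tappComp for the various method-call shapes. One premise of this rule is a derivation of $\teval{\classtable}{\Gamma_C}{\context'[\expr]}{\type_{C'}}$, to which the inductive hypothesis applies to yield $\teval{\classtable}{\Gamma_C\subst{\square}{\type'}}{\context'}{\type_{C'}}$. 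The remaining premises do not mention the hole and carry over verbatim under the extended environment. Reapplying the same root rule then produces $\teval{\classtable}{\Gamma_C\subst{\square}{\type'}}{\context}{\type_C}$.

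The main subtlety, rather than any deep obstacle, is the non-standard treatment of $\square$ as a variable in $\Gamma_C\subst{\square}{\type'}$: one must observe that $\square$ is fresh with respect to $\Gamma_C$ and occurs exactly once in $\context$ and nowhere else in the program, so extending $\Gamma_C$ by $\square$ cannot invalidate any pre-existing lookup in the unchanged premises, and the hole is consulted only at the leaf position. This freshness argument plays the role that a weakening lemma would play in a more conventional formulation, and it is what lets the inductive step go through uniformly across every context shape.
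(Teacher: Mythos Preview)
Your proposal is correct and is the natural way to establish this lemma. The paper itself states Lemma~\ref{lemma:contextual-substitution} without proof, so there is no alternative argument to compare against; your structural induction on $\context$, replacing the sub-derivation for $\expr$ at the hole by an application of \tVar to $\square$ and carrying the remaining premises through by freshness of $\square$ (i.e., weakening), is exactly what one would expect and what the paper implicitly relies on.
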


\begin{lemma}[Substitution]\label{lemma:substitution}
If

\begin{itemize}
\item[1.] \teval{\classtable}{\Delta\subst{\square}{\type_C}}{C}{\type_C'}
\item[2.] \teval{\classtable}{\Delta}{\expr}{\type}
\item[3.] \subtype{\type}{\type_C}
\end{itemize}
then \teval{\classtable}{\Delta}{\context[\expr]}{\type_C''} where \subtype{\type_C''}{\type_C'}.
\end{lemma}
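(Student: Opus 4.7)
I will prove Lemma~\ref{lemma:substitution} by structural induction on the context $C$ (equivalently, on the height of the typing derivation for $C$ under the environment $\Delta\subst{\square}{\type_C}$). The overall idea is that the hole $\square$ is treated as a variable of type $\type_C$ in the derivation, so when we replace it by an expression $e$ whose type $\type$ is a subtype of $\type_C$, we can locally reuse the surrounding derivation, applying subsumption only where the refined type matters.

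For the base case $C = \square$, the derivation of \teval{\classtable}{\Delta\subst{\square}{\type_C}}{\square}{\type_C'} must be by \tVar, forcing $\type_C' = \type_C$. Then $C[e] = e$, and hypothesis~(2) gives \teval{\classtable}{\Delta}{e}{\type} with $\subtype{\type}{\type_C} = \subtype{\type}{\type_C'}$, so $\type_C'' = \type$ works. For each inductive context form $\emethcall{C'}{m}{\expr_x}$, $\emethcall{\val}{m}{C'}$, $\echeckmeth{\type'}{C'}{m}{\expr_x}$, $\echeckmeth{\type'}{\val}{m}{C'}$, $\eseq{C'}{\expr_2}$, $\eif{C'}{\expr_2}{\expr_3}$, $\eq{C'}{\expr_2}$, and $\eq{\val}{C'}$, the last rule of the derivation for $C$ has a premise that types $C'$ at some type $\type'$. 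I apply the IH to $C'$ to obtain a typing of $C'[e]$ at some $\type'' \leq \type'$, then re-apply the same last rule to conclude. The easy cases are \tSeq, \tEq, \tIf, and the argument positions of method calls: in these the result type $\type_C'$ is either fixed (e.g., \tbool for \tEq, the second operand's type for \tSeq, a join for \tIf) or is the method's declared return type, and the stronger premise type $\type''$ only flows into positions where subsumption is available by the argument subtyping premise.

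The main obstacle is the receiver positions of method calls: $C = \emethcall{C'}{m}{\expr_x}$ and the checked-call and library variants. Here the original derivation picks up $\classtable(A.m)$ from the receiver's type $A$, while the IH gives a subtype $A'' \leq A$ for $C'[e]$, forcing us to re-look up $\classtable(A''.m)$. This is exactly where I will invoke clause~(3) of class-table validity (Definition~\ref{def:class-table-validity}): if $\classtable(A.m) = \mthtype{\type_1}{\type_2}$ and $\classtable(A''.m) = \mthtype{\type_1'}{\type_2'}$, then $\subtype{\type_1}{\type_1'}$ and $\subtype{\type_2'}{\type_2}$. The contravariance on the domain preserves the argument subtyping premise $\subtype{\type_x}{\type_1'}$, and the covariance on the codomain yields a new result type $\type_2'$ with $\subtype{\type_2'}{\type_2} = \subtype{\type_2'}{\type_C'}$, which is exactly the weakened conclusion the lemma allows.

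For the \cappComp case the same reasoning applies, but I must additionally reason about the type-level evaluation $\typegoesto{\subst{\typevar}{\type_x}\subst{\trec}{A''}}{\expr_{t2}'}{\type_2''}$: because comp-typed methods are declared with bounds $A_{t1}, A_{t2}$ and clause~(3) of validity constrains library methods only through their erased types, I will appeal to the assumption (used elsewhere in the formalism) that valid comp signatures respect subtyping of receiver and argument types, so the computed $\type_1'', \type_2''$ refine $\type_1, \type_2$ as in the pure base case. For the $\tuniv$-typed premises inside \cappComp the type-level subderivations depend only on $\typevar, \trec$ and are unchanged by the outer substitution, so the IH does not need to recurse into them. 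Once all inductive cases are discharged, the resulting derivation types $\context[\expr]$ at some $\type_C'' \leq \type_C'$, completing the proof.
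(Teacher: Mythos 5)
The paper states Lemma~\ref{lemma:substitution} without proof (both it and Lemma~\ref{lemma:contextual-substitution} are simply asserted before the preservation theorem), so there is no official argument to compare against; your structural induction on the context is the natural---essentially the only---way to prove it, and your handling of the easy cases (\tSeq, \tEq, \tIf, and argument positions, where the result type is fixed or absorbed by an existing subtyping premise) is correct. Two issues remain. First, your crux case, the receiver position, genuinely requires clause~3 of class-table validity, but $\valid(\classtable)$ is not among the lemma's stated hypotheses; you must add it (it is available at every use site in the preservation proof) or the case does not go through. Even with validity, clause~3 is conditional: it relates $\classtable(A''.m)$ to $\classtable(A.m)$ only \emph{if} both are defined, and nothing in the paper guarantees that a subtype \subtype{A''}{A} of the receiver type has $m$ defined at all, or defined in the same method set ($\userdef$ versus $\builtin$); for instance $A'' = \tnil$ defeats the case outright. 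You should either flag this as an additional assumption on \classtable (method-set and domain monotonicity under subclassing) or acknowledge that the lemma, like the paper's own preservation argument, is silently relying on it.

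Second, your comp-type paragraph misidentifies the judgment. The lemma is about the pure type checking judgment \teval{\classtable}{\Gamma}{\expr}{\type} of Figure~\ref{figure:type-checking-appendix}, not the rewriting judgment of Figure~\ref{figure:check-insertion-appendix}. In the type checking judgment, library calls occur only as checked calls \echeckmeth{\type}{\expr_0}{m}{\expr_1}, and the rule (\tappComp) for these simply returns the annotation $\type$ and imposes no constraint on the argument, so no reasoning about type-level evaluation, the bounds $A_{t1}, A_{t2}$, or monotonicity of comp signatures is needed: the receiver case reduces to the same membership question as above, and the result type is unchanged. The ``assumption used elsewhere in the formalism'' that valid comp signatures respect subtyping of receiver and argument types does not appear in the paper; fortunately you do not need it, and that paragraph should be replaced by the trivial observation about (\tappComp).
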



With the above definitions and lemmas, we can finally state our preservation theorem:

\begin{theorem}[Preservation]
\textit{If}
\begin{itemize}
\item[(1)] \eeval{\dyn{\dynenv}{\expr}{\stack}}{\dyn{\dynenv'}{\expr'}{\stack'}}
\item[(2)] \teval{\classtable}{\Gamma}{\expr}{\type}
\item[(3)] \subtype{\type}{\typestack}
\item[(4)] \consistent{\Gamma}{\dynenv}
\item[(5)] \consistent{\typestack}{\stack}
\item[(6)] \valid(\classtable)
\end{itemize}
\textit{then there exists $\Delta, \typestack', \type'$ such that}
\begin{itemize}
\item[(a)] \teval{\classtable}{\Delta}{\expr'}{\type'}		
\item[(b)] \subtype{\type'}{\typestack'}
\item[(c)] \consistent{\Delta}{\dynenv'}
\item[(d)] \consistent{\typestack'}{\stack'}
\item[(e)] If $\stack = \stack'$ then \subtype{\type'}{\type} and $\Delta = \Gamma,\Gamma'$ for some $\Gamma'$
\end{itemize}
\end{theorem}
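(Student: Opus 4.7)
The plan is to proceed by induction on the derivation of the dynamic step \eeval{\dyn{\dynenv}{\expr}{\stack}}{\dyn{\dynenv'}{\expr'}{\stack'}}, performing case analysis on the final rule. For each rule, I will construct the witnesses $\Delta$, $\typestack'$, and $\type'$ from the available data (chiefly the typing derivation of $\expr$, the environmental consistency \consistent{\Gamma}{\dynenv}, the stack consistency \consistent{\typestack}{\stack}, and class table validity). In most cases $\Delta = \Gamma$ and $\typestack' = \typestack$; only the method call rules (\reAppUD), (\reAppBI), and the return rule (\reRet) will manipulate the stack, and only (\reAppUD) and (\reRet) will change $\Delta$ from $\Gamma$.

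\paragraph{Easy Cases.}
First I would dispatch the ``local'' reductions. For (\reVar), (\reSelf), and (\reTSelf), the new expression is $\dynenv(x)$ (resp.\ $\dynenv(\eself)$, $\dynenv(\trec)$); the consistency \consistent{\Gamma}{\dynenv} directly gives a $\type'$ with \subtype{\type'}{\Gamma(x)}, which suffices for (a),(b),(e). For (\reSeq), (\reNew), (\reEIfTrue), (\reEIfFalse), (\reEEqTrue), (\reEEqFalse), I simply read the required type off the typing derivation of $\expr$: e.g.\ for (\reEIfTrue) the branch $\expr_2$ has type $\type_2 \leq \type_2 \sqcup \type_3$ by (\tIf), and for equality the constants \vtrue, \vfalse have type \ttrue, \tfalse which are subtypes of \tbool. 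In all these cases $\stack' = \stack$ and $\Gamma$ is unchanged, so (c),(d),(e) are immediate.

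\paragraph{Method Call Cases.}
The interesting cases are those that touch the stack. For (\reAppUD), the step is from $\context[\val_r.m(\val)]$ with $\typeof(\val_r) = A$ and $A.m \in \userdef$, to the method body \expr under a fresh environment, pushing \dstackelement{\dynenv}{\context} on \stack. Using class table validity (Def.~\ref{def:class-table-validity}.1), the body \expr checks to a subtype of the declared return type $\type_2$. I will take $\Delta = [\eself \mapsto A, x \mapsto \type_1]$, extend the type stack to $\typestack' = \stacktop{\tstackelement{\Gamma_C}{\type_2}{\type_C}}{\typestack}$ where $\Gamma_C$ and $\type_C$ come from applying Lemma~\ref{lemma:contextual-substitution} to the typing of $\context[\val_r.m(\val)]$, and verify stack consistency using \consistent{\Gamma}{\dynenv} and the fact that \teval{\classtable}{\Gamma_C\subst{\square}{\type_2}}{\context}{\type_C}. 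For (\reAppBI), the checked call $\echeckmeth{\type}{\val_r}{m}{\val}$ reduces to the returned value $\val'$; the side condition \subtype{\typeof(\val')}{\type} on the semantics rule directly yields (a),(b), and the stack is unchanged. The (\reRet) case pops \dstackelement{\dynenv}{\context} off the stack; stack consistency gives \teval{\classtable}{\Gamma_C\subst{\square}{\type}}{\context}{\type'} and \consistent{\typestack'}{\stack'} for the tail, and Lemma~\ref{lemma:substitution} combined with \subtype{\type_{\val}}{\type} (from the invariant \subtype{\type}{\typestack} before the pop) yields the typing of $\context[\val]$.

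\paragraph{Context Rule and Main Obstacle.}
The remaining case is (\reContext), where $\expr = \context'[\expr_0]$ steps to $\context'[\expr_0']$ via an inner step of $\expr_0 \to \expr_0'$. I will invert the typing of $\context'[\expr_0]$ to recover \teval{\classtable}{\Gamma}{\expr_0}{\type_0}, apply the induction hypothesis to obtain $\Delta$, $\type_0'$ with \subtype{\type_0'}{\type_0}, and then use Lemma~\ref{lemma:contextual-substitution} followed by Lemma~\ref{lemma:substitution} to re-thread the new subexpression through $\context'$. Here the hard part will be the bookkeeping needed when the inner step itself modifies the stack: the side conditions on (\reContext) forbid the immediate subterm from being a user-defined call, which excludes push-cases, but an inner redex buried deeper may still be a call; I will rely on the fact that (\reContext) only peels off one context layer at a time while the dynamic semantics also push one frame at a time, so the inductive call returns a $\typestack'$ extending $\typestack$, which I inherit directly. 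Clause (e) of the conclusion then follows because (\reContext) strictly fires when $\stack' \neq \stack$ is possible only via nested calls; when $\stack = \stack'$ the IH gives $\Delta = \Gamma,\Gamma'$ and \subtype{\type_0'}{\type_0}, and Lemma~\ref{lemma:substitution} propagates this subtyping through $\context'$. The main technical difficulty throughout will be maintaining the invariant \subtype{\type'}{\typestack'} across the push/pop transitions, which requires being careful that the newly-pushed type-stack element records exactly the expected-return-type of the pending context, so that a subsequent (\reRet) finds a matching entry to discharge.
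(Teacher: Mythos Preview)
Your overall plan is correct and matches the paper's proof: induction on the evaluation derivation, with the stack-manipulating cases (\reAppUD), (\reAppBI), (\reRet) handled via the stack-consistency machinery and Lemmas~\ref{lemma:contextual-substitution} and~\ref{lemma:substitution}. The easy cases and the three call/return cases are essentially as in the paper.

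Your treatment of (\reContext), however, is muddled in a way that would cost you if you tried to fill in the details. You write that ``an inner redex buried deeper may still be a call'' and that the inductive call may return a $\typestack'$ that extends $\typestack$. This cannot happen. The side conditions on (\reContext) force the inner $\expr$ to be a non-value that is \emph{not} of the form $\context'[\expr'']$ and not a user-defined call $\val_r.m(\val)$ with $A.m\in\userdef$. That rules out both (\reAppUD) (whose redex is always $\context[\val_r.m(\val)]$; the non-trivial-context case is blocked by the third side condition and the trivial-context case by the first) and (\reRet) (whose redex is a value). Hence the inner step necessarily leaves the stack unchanged, i.e.\ $\stack' = \stack$. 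The paper states and uses this fact at the start of the (\reContext) case; once you have it, conclusion (e) of the inductive hypothesis applies unconditionally, giving you \subtype{\type_0'}{\type_0} and $\Delta = \Gamma,\Gamma'$, which is exactly what you need to feed into Lemma~\ref{lemma:substitution}. Your current write-up instead tries to handle a phantom stack-changing sub-case and never actually justifies why \subtype{\type_0'}{\type_0} is available for the substitution step.

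There is also a smaller gap in (\reContext): to invoke the induction hypothesis on the inner step you must supply premise~(3), i.e.\ a type stack $\typestack_e$ with \subtype{\type_0}{\typestack_e} and \consistent{\typestack_e}{\stack}. You cannot reuse $\typestack$ itself, since premise~(3) only gives \subtype{\type}{\typestack} for the \emph{outer} type $\type$, and in general $\type_0 \not\leq \type$ (think of $\expr = \eseq{\expr_1}{\expr_2}$, where $\type_0$ types $\expr_1$). The paper handles this by observing that one can rebuild a $\typestack_e$ from $\typestack$ by replacing the top expected-type with $\type_0$ while keeping the same environments; this preserves consistency with $\stack$ and gives \subtype{\type_0}{\typestack_e} by construction. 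You should state this explicitly.
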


(1) and (2) are standard: they say that some expression \expr takes a step, and that \expr is well typed.
Conclusion (a) states that $\expr'$ is also well typed.
(3) says that the type of \expr is a subtype of the type stack,
and (b) says the same of $\expr'$; this is the crux of the type preservation proof, and as explained above,
it must be phrased in this way in order to account for the use of a stack, which allows us to step to
expressions with completely different types. Notice that (e) also tells us that if the stack goes unchanged,
then \subtype{\type'}{\type}; this is much closer to the standard
statement of preservation. (e) also gives us that if the stack goes unchanged, then the new type
environment is an extension of the old one.

(4) gives us consistency between type and dynamic environments, 
and (5) gives us consistency between the type stack and stack,
the corresponding conclusions (c) and (d) respectively give us the
same for the new environments.

(6) gives us validity of the class table (a corresponding conclusion is
unnecessary since the class table goes unchanged). 

Finally, we proceed with the proof of preservation.

\begin{proof}
By induction on \eeval{\dyn{\dynenv}{\expr}{\stack}}{\dyn{\dynenv'}{\expr'}{\stack'}}
\begin{itemize}

\item Case (\reSelf). By assumption we have
\begin{itemize}
\item[(1)] \eeval{\dyn{\dynenv}{\eself}{\stack}}{\dyn{\dynenv}{\dynenv(\eself)}{\stack}} by (\reSelf)
\item[(2)] \teval{\classtable}{\Gamma}{\eself}{\type}
\item[(3)] \subtype{\Gamma(\eself)}{\typestack}
\item[(4)] \consistent{\Gamma}{\dynenv}
\item[(5)] \consistent{\typestack}{\stack}
\item[(6)] \valid(\classtable)
\end{itemize}
Since (2) must have been derived by (\tSelf), by inversion of this rule we have that $\type = \Gamma(\eself)$.
Let $\Delta = \Gamma$ and $\typestack' = \typestack$. By equality of $\Delta$ and $\Gamma$, (2), (4), and the definition of environmental consistency, there exists $\type'$ such that \teval{\classtable}{\Delta}{\dynenv(\eself)}{\type'} and \subtype{\type'}{\Delta(\eself)}. Then (a) holds since $\dynenv(\eself)$ is well typed. (b) holds since $\subtype{\type'}{\Delta(\eself)} = \subtype{\Gamma(\eself)}{\typestack}$ by (3). Because \subtype{\type'}{\Gamma(\eself)} and $\Delta = \Gamma,\emptyset$, we have (e). Finally, (c) holds by (4), and (d) holds by (5).
\item Case (\reVar), (\reTSelf). Similar to (\reSelf) case.

\item Case (\reNew). By assumption we have
\begin{itemize}
\item[(1)] \eeval{\dyn{\dynenv}{\enew{A}}{\stack}}{\dyn{\dynenv}{\vinst{A}}{\stack}}  by (\reNew)
\item[(2)] \teval{\classtable}{\Gamma}{\enew{A}}{A} by (\tNew)
\item[(3)] \subtype{A}{\typestack}
\item[(4)] \consistent{\Gamma}{\dynenv}
\item[(5)] \consistent{\typestack}{\stack}
\item[(6)] \valid(\classtable)
\end{itemize}

Let $\Delta = \Gamma$, $\typestack' = \typestack$, and $\type' = \type$. Then we get (a) from
(\tObj), (b) from (3), (c) from (4), (d) from (5), and (e) immediately by definition of $\type'$ and $\Delta$. 

\item Case (\reSeq). Trivial.
\item Case (\reEEqTrue), (\reEEqFalse). Trivial. 
\item Case (\reEIfTrue). By assumption we have
\begin{itemize}
\item[(1)] \eeval{\dyn{\dynenv}{\eif{\val}{\expr_1}{\expr_2}}{\stack}}{\dyn{\dynenv}{\expr_1}{\stack}} 
\item[(2)] \teval{\classtable}{\Gamma}{\eif{\val}{\expr_1}{\expr_2}}{\type}
\item[(3)] \subtype{\type}{\typestack}
\item[(4)] \consistent{\Gamma}{\dynenv}
\item[(5)] \consistent{\typestack}{\stack}
\item[(6)] \valid(\classtable)
\end{itemize}

By (2) and inversion of (\tIf), there exits $\type_v, \type_1, \type_2$ such that:
\begin{itemize}
\item[(7)] \teval{\classtable}{\Gamma}{\val}{\type_v} 
\item[(8)] \teval{\classtable}{\Gamma}{\expr_1}{\type_1}
\item[(9)] \teval{\classtable}{\Gamma}{\expr_2}{\type_2}
\item[(10)] $\type = \type_1\cup\type_2$
\end{itemize}

Let $\Delta = \Gamma$, $\typestack' = \typestack$, and $\type' = \type_1$. Then from (8) we trivially have (a). Additionally, $\type' = \subtype{\type_1}{\type_1\sqcup\type_2}$, and so \subtype{\type'}{\typestack'} giving us (b). Finally, we have (c) by (4), (d) by (5), and (e) by the fact that \subtype{\type'}{\type} and $\Delta = \Gamma, \emptyset$.

\item Case (\reEIfFalse). Similar to (\reEIfTrue) case.

\item Case (\reContext). By assumption we have:
\begin{itemize}
\item[(1)] \eeval{\dyn{\dynenv}{\context[\expr]}{\stack}}{\dyn{\dynenv'}{\context[\expr']}{\stack'}} where
\begin{itemize}
\item[(1a)] \eeval{\dyn{\dynenv}{\expr}{\stack}}{\dyn{\dynenv'}{\expr'}{\stack'}}
\item[(1b)] $\neg(\expr=\val_r.m(v)\land \typeof(\var_r) = A \land A.m\in\userdef)$ for some \val, $\val_r$, $A$, $m$
\item[(1c)] $\expr\neq\val$ for some \val
\item[(1d)] $\expr\neq \context'[\expr']$ for some $\context'$, $\expr'$
by (\reContext)
\end{itemize}
\item[(2)] \teval{\classtable}{\Gamma}{\context[\expr]}{\type}
\item[(3)] \subtype{\type}{\typestack}
\item[(4)] \consistent{\Gamma}{\dynenv}
\item[(5)] \consistent{\typestack}{\stack}
\item[(6)] \valid(\classtable)
\end{itemize}
First, note that we must have $\stack' = \stack$, because the only cases where this would not happen would be if (\reAppUD) or (\reRet) were used to derive (1a), and this cannot be the case given (1b) and because (\reRet) only applies to top-level values and thus can't apply to a context. 
Now note that since (2) gives us $\teval{\classtable}{\Gamma}{\context[\expr]}{\type}$,
by inversion, there should exist $\type_e$ so that 

\begin{itemize}
\item[(7)] \teval{\classtable}{\Gamma}{\expr}{\type_e}
\end{itemize}

Let $\typestack_e$ be a type stack such that \consistent{\typestack_e}{\stack}, \subtype{\type_e}{\typestack_e}, and all type environments in $\typestack_e$ are the same as those in \typestack; it is straightforward to construct such a $\typestack_e$ from the existing \typestack. Then, by (1a), (7), \subtype{\type_e}{\typestack_e}, (4), \consistent{\typestack_e}{\stack}, and (6), we satisfy the premises of the preservation theorem. Therefore, applying the inductive hypothesis, there exists $\Delta_e, \typestack_e', \type_e'$ such that:

\begin{itemize}
\item[($\textnormal{a}_i$)] \teval{\classtable}{\Delta_e}{\expr'}{\type_e'}
\item[($\textnormal{b}_i$)] \subtype{\type_e'}{\typestack_e'}
\item[($\textnormal{c}_i$)] \consistent{\Delta_e}{\dynenv'}
\item[($\textnormal{d}_i$)] \consistent{\typestack_e'}{\stack'}
\item[($\textnormal{e}_i$)] If $\stack = \stack'$ then \subtype{\type_e'}{\type_e} and $\Delta_e = \Gamma,\Gamma'$ for some $\Gamma'$
\end{itemize}

Let $\Delta = \Delta_e$ and $\typestack' = \typestack$. Now, because $\stack' = \stack$, by (g$_i$) we have $\Delta = \Gamma,\Gamma'$. 
By Lemma~\ref{lemma:contextual-substitution}
we have \teval{\classtable}{\Gamma\subst{\square}{\type_e}}{\context}{\type},
and by the weakening lemma, we also have \teval{\classtable}{\Delta\subst{\square}{\type_e}}{\context}{\type}. By (e$_i$) we also have \subtype{\type_e'}{\type_e}. These, along with (a$_i$) and the substitution lemma~\ref{lemma:substitution}, give us that \teval{\classtable}{\Delta}{C[\expr']}{\type'} for some $\type'$ where \subtype{\type'}{\type}. This immediately gives us (a), and because \subtype{\type'}{\type} and \subtype{\type}{\typestack} by (3) we get (b). Because $\Delta = \Gamma,\Gamma'$ and \subtype{\type'}{\type}, we get (e). We get (c) from (c$_i$). (d) comes from (5) and the fact that $\stack' = \stack$. 

\item Case (\reAppBI). By assumption we have
\begin{itemize}	
	\item[(1)] \eeval{\dyn{\dynenv}{\echeckmeth{\type_{res}}{\val_r}{m}{\val_1}}{\stack}}{\dyn{\dynenv}{\val}{\stack}} where
	\begin{itemize}
		\item[(1a)] $\typeof(\val_r) = A_{rec}$		
		\item[(1b)] $A_{rec}.m\in\builtin$
		\item[(1c)] $\val = \call{A_{rec}.m}{\val_r}{\val_1}$
		\item[(1d)] \subtype{\typeof(\val)}{\type_{res}}
	\end{itemize}
	by (\reAppBI).
	\item[(2)] \teval{\classtable}{\Gamma}{\echeckmeth{\type_{res}}{\val_r}{m}{\val_1}}{\type}
	\item[(3)] \subtype{\type}{\typestack}
	\item[(4)] \consistent{\Gamma}{\dynenv}
	\item[(5)] \consistent{\typestack}{\stack}
	\item[(6)] \valid(\classtable)
\end{itemize}

Because of (1b), we know that (2) must have been derived by (\tappComp). Instantiating this rule with the
obvious bindings, we get that $\type = \type_{res}$. Let $\Delta = \Gamma$, $\typestack' = \typestack$,
and $\type' = \typeof(\val)$. 
If \val is \vnil, we get (a) immediately by (1d), definition of \typeof, and (\tNil); a similar argument holds
for all other potential values of \val. In all cases, we get (b) from (1d) and (3). We get (c) from (4), (d) from (5), and (e) from (1d) and definition of $\Delta$. 

\item Case (\reAppUD). By assumption we have

\begin{itemize}
\item[(1)]\eeval{\dyn{\dynenv}{\context[\emethcall{\val_r}{m}{\val}]}{\stack}}{\dyn{[\eself \mapsto \val_r, \var \mapsto \val]}{\expr}{\stacktop{\dstackelement{\dynenv}{\context}}{\stack}}} where
\begin{itemize}
\item[(1a)] $\typeof(\val_r) = A_{rec}$
\item[(1b)] $A_{rec}.m\in\userdef$
\item[(1c)] \defof{A_{rec}.m}{\var}{\expr}
\end{itemize}
\item[(2)] \teval{\classtable}{\Gamma}{\context[\emethcall{\val_r}{m}{\val}]}{\type_C}
\item[(3)] \subtype{\type_C}{\typestack}
\item[(4)] \consistent{\Gamma}{\dynenv} 
\item[(5)] \consistent{\typestack}{\stack}
\item[(6)] \valid(\classtable)
\end{itemize}

Noting the type checking rules for each context case, we know (2) must have been derived by some rule of the form:

$$\inference{\teval{\classtable}{\Gamma}{\emethcall{\val_r}{m}{\val}}{\type_m} \\ \vdots}{\teval{\classtable}{\Gamma}{\context[\emethcall{\val_r}{m}{\val}]}{\type_C}}$$

From this and the contextual substitution lemma~\ref{lemma:contextual-substitution}, 
we know \teval{\classtable}{\Gamma\subst{\square}{\type_m}}{\context}{\type_C}. Additionally, by inversion, we have 
\begin{itemize}
\item[(7)] \teval{\classtable}{\Gamma}{\emethcall{\val_r}{m}{\val}}{\type_m} 
\end{itemize}

By (1a), definition of \typeof, and the value type checking rules, we know that
\teval{\classtable}{\Gamma}{\val_r}{A_{rec}}.
Because by (1b) we know that $A_{rec}.m\in\userdef$ and  
$\userdef$ and $\builtin$ are by definition disjoint, 
the type checking judgment (7) must have been derived from rule \tappBase.

By instantiation of  rule \tappBase with $\type_m = \type_2$ we get  

$$
	\inference{\teval{\classtable}{\Gamma}{\val_r}{A_{rec}} &&
	A.m\in\userdef \\
	\teval{\classtable}{\Gamma}{\val}{\type_{arg}} \\
	\cteval{\classtable}{A.m}{\mthtype{\type_1}{\type_2}} &&
	\subtype{\type_{arg}}{\type_1} \\
}
{\teval{\classtable}{\Gamma}{\emethcall{\val_r}{m}{\val}}{\type_2}}[\tappBase]
$$

Thus, by inverting \tappBase, there must exist $\type_{arg}$, $\type_1$, $\type_2$ such that:

\begin{itemize}
\item[(8)] \teval{\classtable}{\Gamma}{\val}{\type_{arg}} 
\item[(9)] \cteval{\classtable}{A.m}{\mthtype{\type_1}{\type_2}} 
\item[(10)] \subtype{\type_{arg}}{\type_1} 
\end{itemize}

Now, let $\Delta = [\var \mapsto \type_1, \eself \mapsto A_{rec}]$.
Also, let $\typestack' = \stacktop{\tstackelement{\Gamma}{\type_2}{\type_C}}{\typestack}$.
By (1b), (1c), (6), (8), and (9), we know there exists $\type'$ such that \teval{\classtable}{[\eself \mapsto A_{rec}, \var \mapsto \type_1]}{\expr}{\type'} where \subtype{\type'}{\type_2}, 
which gives us (a); that is to say, the body of the method type checks as expected.
(b) holds by construction of $\typestack'$. (c) holds 
by construction of $\Delta$.
(e) holds trivially since $\stack\neq\stack'$.

Finally, we show (d). By (4) we have \consistent{\Gamma}{\dynenv}, and as noted above we have \teval{\classtable}{\Gamma\subst{\square}{\type_2}}{\context}{\type_C}. This gives us \consistent{\tstackelement{\Gamma}{\type_2}{\type_C}}{\dstackelement{\dynenv}{\context}}. By (3) we have \subtype{\type_C}{\typestack}, and by (5) we have \consistent{\typestack}{\stack}. Putting this all together, by the definition of stack consistency, this gives us \consistent{\stacktop{\tstackelement{\Gamma_C}{\type_2}{\type_C}}{\typestack}}{\stacktop{\dstackelement{\dynenv}{\context}}{\stack}}, which is (d).

\item Case (\reRet). By assumption we have
\begin{itemize}
\item[(1)] \eeval{\dyn{\dynenv'}{\val}{\stacktop{\dstackelement{\dynenv}{\context}}{\stack}}}{\dyn{\dynenv}{\context[\val]}{\stack}}
\item[(2)] \teval{\classtable}{\Gamma}{\val}{\type}
\item[(3)] \subtype{\type}{\stacktop{\tstackelement{\Gamma_C}{\type_C}{\type_C'}}{\typestack}} 
\item[(4)] \consistent{\Gamma}{\dynenv'}
\item[(5)] \consistent{\stacktop{\tstackelement{\Gamma_C}{\type_C}{\type_C'}}{\typestack}}{\stacktop{\dstackelement{\dynenv}{\context}}{\stack}}
\item[(6)] \valid(\classtable)
\end{itemize}
Let $\Delta = \Gamma_C$
and $\typestack' = \typestack$. 
By (5) we have \teval{\classtable}{\Gamma_C\subst{\square}{\type_C}}{\context}{\type_C'}, and by (3) we have \subtype{\type}{\type_C}. Then by these, (2), and the substitution lemma~\ref{lemma:substitution}, we have that \teval{\classtable}{\Gamma_C}{\context[\val]}{\type_C''} where \subtype{\type_C''}{\type_C'}; this also means that
\teval{\classtable}{\Delta}{\context[\val]}{\type_C''}.
%


 Let $\type' = \type_C''$ and we have (a). By (5) and the definition of stack consistency we have \subtype{\type_C'}{\typestack}, and since \subtype{\type_C''}{\type_C'}, we have (b). 
Finally, (c) and (d) hold by (5), and (e) holds trivially since $\stack\neq\stack'$. 
	
\end{itemize}
\end{proof}

Before proving progress, we introduce one assumption and one lemma:

\begin{assumption}[Library Method Termination]\label{assumption:builtin-termination}
For any $A$, $m$, $\val_1$, and $\val_2$ where $A.m\in\builtin$, \call{A.m}{\val_1}{\val_2} will terminate
and return a value. 
\end{assumption}

This assumption is necessary to prove progress since we do not have the
mechanism to refer to the step-by-step evaluation of library methods. A straightforward expansion of \corelang
would allow us to do so, but we omit such an expansion here for simplicity. 

\begin{lemma}[Proper context]\label{lemma:proper-context}
For any expression \expr, if $\expr = \context[\expr']$ for some $\context, \expr'$, then there exists a proper context $\context_P$ and proper subexpression $\expr_P$ such that $\expr=\context_P[\expr_P]$, $\expr_P\neq\val$ for any value \val,  and $\not\exists \context', \expr''$ such that $\expr_P = \context'[\expr'']$.
\end{lemma}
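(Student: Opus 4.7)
The plan is structural induction on \expr, using the evaluation context grammar from Figure~\ref{figure:lang-appendix}. The guiding intuition is that a proper decomposition identifies the current redex: we descend into \expr along the evaluation order forced by the context productions until we arrive at a subexpression whose immediate subterms are all values, and that subexpression is $\expr_P$.

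First I would dispatch the atomic cases ($\val$, $\var$, $\eself$, $\trec$, $\enew{A}$): no non-$\square$ context production has the shape of these forms, so the only decomposition is $\expr = \square[\expr]$; for the non-value atomic forms we set $\context_P = \square$ and $\expr_P = \expr$, which vacuously satisfies the non-decomposability clause. For the compound forms ($\emethcall{\expr_1}{m}{\expr_2}$, $\echeckmeth{\type}{\expr_1}{m}{\expr_2}$, $\eseq{\expr_1}{\expr_2}$, $\eif{\expr_1}{\expr_2}{\expr_3}$, $\eq{\expr_1}{\expr_2}$) I scan the immediate subterms in the order dictated by the context grammar---receiver before argument, guard before branches, left before right. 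If the leftmost non-value subexpression $\expr_i$ exists, it is trivially decomposable via $\square$, so the inductive hypothesis applied to $\expr_i$ yields a proper pair $(\context_P^i, \expr_P^i)$; I then rebuild $\context_P$ by plugging $\context_P^i$ into the corresponding hole position (e.g., $\emethcall{\val_1}{m}{\context_P^i}$) and set $\expr_P = \expr_P^i$. Otherwise, every immediate subterm is already a value and I take $\context_P = \square$ and $\expr_P = \expr$; this is the redex case.

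The main obstacle is discharging the non-decomposability clause in the redex case. Taken literally, the clause is unsatisfiable, since every expression admits the trivial decomposition $\square[\expr]$; the intended reading, consistent with its use in rule (\reContext) of Figure~\ref{figure:dyn-sem-appendix}, is that no non-trivial context $\context' \neq \square$ paired with a non-value $\expr''$ reconstructs $\expr_P$. This reduces to a mechanical case analysis of the context grammar: each non-$\square$ production (such as $\emethcall{\context}{m}{\expr}$, $\eif{\context}{\expr}{\expr}$, or $\eq{\val}{\context}$) would force some immediate subterm of $\expr_P$ to be of the form $\context'[\expr'']$ with $\expr''$ a non-value, and hence not itself a value---contradicting the fact that all immediate subterms of $\expr_P$ are values in the redex case. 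A minor caveat is that when $\expr$ itself is a value, the hypothesis is satisfied only via $\square$ and no proper $\expr_P$ exists; this edge case is excluded by the contexts where the lemma is invoked (notably the progress proof), where the expression under consideration has already been shown not to be a value.
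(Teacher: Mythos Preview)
Your approach is essentially the same as the paper's: the paper gives only a one-line sketch (``construct the proper context by recursively pushing the hole $[]$ deeper into subcontexts while possible''), and your structural induction on $\expr$ with left-to-right scanning for the first non-value subterm is exactly this idea spelled out. Your observations about the literal reading of the non-decomposability clause (and the value edge case) are correct and go beyond what the paper addresses; the paper leaves these imprecisions implicit, relying on the intended use in \reContext.
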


Note that it is still possible in Lemma~\ref{lemma:proper-context} that $C_P = C$ and $\expr_P = \expr$. The high-level idea behind the proof of Lemma~\ref{lemma:proper-context} is that we construct the proper context by recursively pushing the hole $[]$ deeper into subcontexts while possible. With these, we can proceed with defining and proving progress: 

\begin{theorem}[Progress]
\textit{If}
\begin{itemize}
\item[(1)] \teval{\classtable}{\Gamma}{\expr}{\type}
\item[(2)] \subtype{\type}{\typestack}
\item[(3)] \consistent{\Gamma}{\dynenv}
\item[(4)] \consistent{\typestack}{\stack}
\item[(5)] \valid(\classtable)
\end{itemize}
\textit{then one of the following holds}
\begin{itemize}
\item[1.] \expr is a value
\item[2.] There exists  $\dynenv'$, $\expr'$, $\stack'$ such that \eeval{\dyn{\dynenv}{\expr}{\stack}}{\dyn{\dynenv'}{\expr'}{\stack'}}
\item[3.] \eeval{\dyn{\dynenv}{\expr}{\stack}}{blame}
\end{itemize}
\end{theorem}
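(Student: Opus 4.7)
The plan is to prove Progress by induction on the typing derivation \teval{\classtable}{\Gamma}{\expr}{\type}, doing a case analysis on the last rule applied. The value rules (\tNil, \tObj, \tTrue, \tFalse, \tType) immediately give case 1, since \expr is already a value. For the variable rules (\tVar, \tSelf, \tTSelf), I invoke environmental consistency (hypothesis 3) to conclude that the variable is bound in \dynenv, so \reVar, \reSelf, or \reTSelf applies and case 2 holds. The \tNew case reduces directly via \reNew.

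For the composite-expression rules (\tSeq, \tEq, \tIf, \tappBase, \tappComp), I split on whether the immediate subexpressions are already values. When they are, the appropriate non-context reduction rule fires: \reSeq for sequencing, \reEEqTrue or \reEEqFalse for equality (by case analysis on value equivalence), \reEIfTrue or \reEIfFalse for conditionals, \reAppUD (with the trivial context $\square$) for a user-defined call, and \reAppBI for a checked library call. When some subexpression is not a value, I apply the induction hypothesis to it, obtaining either a step or blame. To lift an obtained step through the surrounding expression I use \reContext; Lemma~\ref{lemma:proper-context} is crucial here because it guarantees we can decompose any non-value expression into a proper context whose hole satisfies the side conditions of \reContext (in particular, the hole is not itself further decomposable and is not a user-defined method call).

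The main obstacles lie in the method-call cases. For \tappBase, once the receiver is a value I must dispatch via \typeof: if the receiver is \vnil we need the omitted blame rule to conclude case 3, otherwise \typeof gives a class $A$ and, since $A.m \in \userdef$ follows from the premise of \tappBase together with the class-table validity (hypothesis 5 and Definition~\ref{def:class-table-validity}), the method is defined, so \reAppUD applies. For \tappComp the expression is already in \echeckmeth form; here Assumption~\ref{assumption:builtin-termination} guarantees that \call{A.m}{\val_r}{\val} produces a value $\val'$, and then either $\typeof(\val') \leq \type_2$ and \reAppBI fires, or the subtype check fails and we step to blame. A subtle but routine coordination point is that \reAppUD appears twice in the semantics (as a bare application and under a context), while \reContext explicitly forbids a user-defined call at the hole; this is exactly what Lemma~\ref{lemma:proper-context} is designed to resolve, so the argument does not branch awkwardly.

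Finally, the Soundness corollary (Theorem~\ref{theorem:soundness}) follows by combining Preservation and Progress in the usual way: from \tcheck{\classtable}{\emptyset}{\expr}{\expr'}{A} we extract \teval{\classtable}{\emptyset}{\expr'}{A} (the rewriting rules of Figure~\ref{figure:check-insertion-appendix} and type-checking rules of Figure~\ref{figure:type-checking-appendix} have the same structure up to inserted \echeckmeth markers), and then by induction on the reduction sequence either $\expr'$ reaches a value, reaches blame, or diverges, which is exactly the stated conclusion.
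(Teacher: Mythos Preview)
Your overall strategy mirrors the paper's: case analysis on the shape of \expr (the paper phrases this as induction on \expr rather than on the typing derivation, but since the typing rules are syntax-directed this is immaterial), with the same treatment of value, variable, and \enew{A} cases, and the same split on whether subexpressions are already values in the composite cases. Your use of Assumption~\ref{assumption:builtin-termination} for the checked library call and of class-table validity to obtain a definition for \reAppUD also matches the paper.

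There is one concrete slip. You assert that Lemma~\ref{lemma:proper-context} guarantees the hole ``is not a user-defined method call,'' and later that the lemma is what ``resolves'' the coordination between \reAppUD and \reContext. It does not: the lemma only ensures $\expr_P$ is not a value and is not further decomposable as a context. It remains possible that $\expr_P \equiv \emethcall{\val_r}{m}{\val}$ with $\typeof(\val_r).m \in \userdef$, and in that case \reContext is blocked by its side condition. The paper handles this by an explicit branch \emph{after} applying the lemma: if $\expr_P$ is such a call, apply \reAppUD directly to $\context_P[\expr_P]$ (using validity to obtain the method body); otherwise invoke the induction hypothesis on $\expr_P$ and then fire \reContext. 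Your sketch collapses this branch, so as written there is a small gap. Relatedly, you describe applying the IH to the immediate non-value subexpression first and only then invoking the proper-context lemma; in the paper the order is reversed---decompose via the lemma first, then apply the IH to $\expr_P$---which is what makes the side conditions of \reContext line up with the step you obtain.

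The final paragraph on Soundness is extraneous to the Progress statement, though what you say there is consistent with the paper's Theorem~\ref{theorem:tc-sound} and Lemma~\ref{lemma:check-tc-eq}.
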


\begin{proof}
By induction on \expr.
\begin{itemize}
\item Case \vnil, \vtrue, \vfalse, \vinst{A}, or \type. \expr is a value.
\item Case \eself. By assumption (1) and (\tSelf), we know $\eself\in dom(\Gamma)$. By (3), this means $\eself\in dom(\dynenv)$. This means rule (ESelf) can be applied, so we can take a step.
\item Cases \var, \trec. Similar to \eself.
\item Case \eq{\expr_1}{\expr_2} We split into cases on whether or not $\expr_1$, $\expr_2$ are values, for any $\val_1, \val_2$:
\begin{itemize}
	\item $\expr_1 \equiv \val_1$ and $\expr_2 \equiv \val_2$. This case is trivial since one of (\reEEqTrue) or (\reEEqFalse) will always apply.
	\item $\expr_1\not\equiv\val_1$. Then $\expr \equiv \context[\expr_1]$ with $\context \equiv \eq{[]}{\expr_2}$. By Lemma~\ref{lemma:proper-context}, there exists a proper context $\context_P$ and proper subexpression $\expr_P$ such that $\expr \equiv \context_P[\expr_P]$. If $\expr_P\not\equiv\emethcall{\val_r}{m}{\val}$ for any $\val_r, A, m, \val$ where $\typeof(\val_r)=A_r \land A_r.m\in\userdef$, then by the inductive hypothesis 
there exists $\dynenv', \expr_P', \stack'$ such that \eeval{\dyn{\dynenv}{\expr_P}{\stack}}{\dyn{\dynenv'}{\expr_P'}{\stack'}} (or such that we step to blame, in which case \expr steps to blame and we are done). See the (\reContext) case of the preservation proof for a discussion of how to satisfy the premises of the inductive hypothesis, since the premises here a subset of those of preservation. By construction of the proper subexpression as specified in Lemma~\ref{lemma:proper-context}, $\not\exists \context', \expr''$ such that $\expr_P = \context'[\expr'']$, and $\expr_P\neq\val$ for any value \val. This satisfies all the premises of (\reContext), therefore this rule would apply to $\context_P[\expr_P]$ to take a step.
	
 Otherwise, $\expr_P\equiv\emethcall{\val_r}{m}{\val}$ with $\typeof(\val_r) = A_r \land A_r.m\in\userdef$. We know that (1) must have been derived by rule (\tEq). By inversion of this rule, this means \teval{\classtable}{\Gamma}{\emethcall{\val_r}{m}{\val}}{\type_m} for some $\type_m$. 
 By definition of \typeof, it must be that $\type_m = A_r$. 
 Because $A_r.m\in\userdef$ and by definition $\userdef\cap\builtin = \emptyset$, we therefore know \teval{\classtable}{\Gamma}{\emethcall{\val_r}{m}{\val}}{\type_m} must have been derived from (\tappBase). By inversion of this rule, we know \cteval{\classtable}{A_r.m}{\mthtype{\type_1}{\type_2}} for some $\type_1$, $\type_2$. By (5), this means \defof{A_r.m}{\var}{\expr_m} for some $\expr_m$.
 If $\val_r$ is \vnil, then we return blame.
 Thus, we have satisfied all the premises of rule (\reAppUD), therefore we can apply this rule and we are done.

	\item $\expr_1 \equiv \val_1$ and $\expr_2 \not\equiv \val_2$. Then $\expr \equiv \context[\expr_2]$ with $\context\equiv \eq{\val_1}{[]}$. By a similar argument to the previous case, either (\reContext) or (\reAppUD) must apply.
\end{itemize}
\item Case \eseq{\expr_1}{\expr_2}. We split cases on whether or not $\expr_1$ is a value: 
\begin{itemize}
\item $\expr_1 \equiv \val$. Then, the evaluation rule \reSeq on the expression \eseq{\val}{\expr_2} applies to take a step.
\item Otherwise, $\expr \equiv \context[\expr_1]$ with $\context \equiv \eseq{[]}{\expr_2}$. By a similar argument to the $\eq{\expr_1}{\expr_2}$ case, either (\reContext) or (\reAppUD) must apply.
\end{itemize}
\item Case \enew{A}. Trivial.

\item Case \eif{\expr_0}{\expr_1}{\expr_2}. We split cases on the structure of $\expr_0$. 
\begin{itemize}
\item $\expr_0 \equiv \vnil$ or $\expr_0\equiv\vfalse$. The rule \reEIfFalse applies to take a step.
\item $\expr_0 \equiv \val$ where \val is not \vnil or \vfalse. The rule \reEIfTrue applies to take a step.
\item Otherwise, $\expr \equiv \context[\expr_0]$ with $\context \equiv \eif{[]}{\expr_1}{\expr_2}$. By a similar argument to the $\eq{\expr_1}{\expr_2}$ case, either (\reContext) or (\reAppUD) must apply.
\end{itemize}

\item Case \emethcall{\expr_1}{m}{\expr_2}. We split this case on the structure of $\expr_1, \expr_2$:
\begin{itemize}
\item $\expr_1\not\equiv \val$. Then $\expr \equiv \context[\expr_1]$ with $\context \equiv \emethcall{[]}{m}{\expr_2}$. By a similar argument to the $\eq{\expr_1}{\expr_2}$ case, either (\reContext) or (\reAppUD) must apply.
\item $\expr_1\equiv \val_1$ and $\expr_2\not\equiv \val_2$. Then $\expr \equiv \context[\expr_2]$ with $\context \equiv \emethcall{\expr_1}{m}{[]}$. By a similar argument to the $\eq{\expr_1}{\expr_2}$ case, either (\reContext) or (\reAppUD) must apply.

\item $\expr_1=\val_1$ and $\expr_2 = \val_2$. 
If $\val_1$ is \vnil, we step to $blame$.
Because $\expr$ is a non-checked method call, we know (1) must have
been derived by (\tappBase). By inversion of this rule, \teval{\classtable}{\Gamma}{\val_1}{A_1} for some $A_1$,
and that $A_1.m\in\userdef$ and \cteval{\classtable}{A.m}{\mthtype{\type_{in}}{\type_{out}}}.
By (7), this means \defof{A.m}{\var}{\expr_m} for some $\expr_m$. 
This satisfies all the premises of rule (\reAppUD), therefore we can apply this rule.
\end{itemize}

\item Case \echeckmeth{\type_{res}}{\expr_1}{m}{\expr_2}. We split this case on the structure of $\expr_1, \expr_2$:
\begin{itemize}
	\item $\expr_1\not\equiv \val$. Then $\expr \equiv \context[\expr_1]$ with $\context \equiv \emethcall{[]}{m}{\expr_2}$. By a similar argument to the $\eq{\expr_1}{\expr_2}$ case, either (\reContext) or (\reAppUD) must apply.
	\item $\expr_1\equiv \val_1$ and $\expr_2\not\equiv \val_2$. Then $\expr \equiv \context[\expr_2]$ with $\context \equiv \emethcall{\expr_1}{m}{[]}$. By a similar argument to the $\eq{\expr_1}{\expr_2}$ case, either (\reContext) or (\reAppUD)
	
	\item $\expr_1 = \val_1$ and $\expr_2 = \val_2$. 
	If $\val_1$ is \vnil, we return blame. 
	Because \expr is a checked method call, we know (1) must have been
	derived by (\tappComp). By inversion of this rule, we know \teval{\classtable}{\Gamma}{\val_1}{A_1} for some $A_1$, 
	where $A_1.m\in\builtin$.
	Let $\val_{res}$ = \call{A.m}{\val_1}{\val_2}; by Assumption~\ref{assumption:builtin-termination} we know that this call will terminate and return a value. Then, if $\typeof(\val_{res})$ is not a
	subtype of $\type_{res}$, we will return blame. Otherwise, we will have satisfied all the preconditions of (\reAppBI),
	therefore we can apply this rule and take a step.
	
\end{itemize}

\end{itemize}
\end{proof}

We now introduce our theorem of soundness of the type checking judgment. 

\begin{theorem}[Soundness of Type Checking]
\label{theorem:tc-sound}
\textit{If \valid(\classtable) and \tcheck{\classtable}{\emptyset}{\expr}{\expr'}{\type_C} and \teval{\classtable}{\emptyset}{\expr'}{\type}, then either $\expr'$ reduces to a value, $\expr'$ reduces to blame, or $\expr '$ does not terminate.}
\end{theorem}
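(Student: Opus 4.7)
The plan is to derive soundness as a standard corollary of the Preservation and Progress theorems that have already been established for the type checking judgment $\teval{\classtable}{\Gamma}{\expr}{\type}$. The check-insertion judgment $\tcheck{\classtable}{\emptyset}{\expr}{\expr'}{\type_C}$ is used here only to guarantee the existence of the rewritten target expression $\expr'$; the actual soundness reasoning is carried out entirely on $\expr'$ using the hypothesis $\teval{\classtable}{\emptyset}{\expr'}{\type}$.

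First I would set up the initial invariants required by Progress. Take $\Gamma = \emptyset$, $\dynenv = \emptyset$, $\stack = \cdot$, and $\typestack = \cdot$. Environmental consistency $\consistent{\emptyset}{\emptyset}$ holds vacuously, and stack consistency $\consistent{\cdot}{\cdot}$ holds by the base case of its inductive definition. The stack-subtyping obligation $\subtype{\type}{\cdot}$ on the empty type stack is vacuous (no element imposes a constraint), and $\valid(\classtable)$ is assumed. Thus all premises of the Progress theorem are satisfied by $\expr'$.

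Next I would run the usual progress-plus-preservation loop. By Progress, either (i) $\expr'$ is a value, in which case $\expr'$ already reduces to a value and we are done; or (ii) $\expr'$ steps to $blame$, in which case $\expr'$ reduces to blame and we are done; or (iii) there exist $\dynenv_1, \expr_1, \stack_1$ with $\eeval{\dyn{\emptyset}{\expr'}{\cdot}}{\dyn{\dynenv_1}{\expr_1}{\stack_1}}$. In case (iii), Preservation provides a new type environment $\Delta_1$ and type stack $\typestack_1$ such that $\teval{\classtable}{\Delta_1}{\expr_1}{\type_1}$ with $\subtype{\type_1}{\typestack_1}$, $\consistent{\Delta_1}{\dynenv_1}$, and $\consistent{\typestack_1}{\stack_1}$. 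Since $\valid(\classtable)$ is preserved (the class table is never modified by reduction), the premises of Progress again hold for the new configuration. Iterating, we obtain a (possibly infinite) reduction sequence whose every configuration is well typed. Thus the sequence either terminates at a value, terminates at $blame$, or is infinite, which are exactly the three cases in the conclusion.

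The only mild subtlety is checking that the inductive hypothesis of Preservation really does keep feeding into Progress at every step: this requires that the base of the type stack stays empty (so the vacuous subtyping obligation persists) and that class table validity is an invariant of reduction. The former follows because Preservation only grows the type stack in lockstep with the runtime stack, and since we started both at $\cdot$ the invariant $\consistent{\typestack_i}{\stack_i}$ together with the shape of $\stack_i$ keeps the configurations compatible. The latter holds because the reduction rules do not alter $\classtable$. No significant obstacle arises beyond this bookkeeping; the heavy lifting has already been done inside Preservation (in particular the \cappComp and \reAppUD cases).
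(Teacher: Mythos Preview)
Your overall strategy---instantiate the invariants, then run the standard progress-plus-preservation loop---is exactly what the paper does, and the argument is essentially correct.

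There is one small but genuine gap in your setup. You take $\stack = \cdot$ and $\typestack = \cdot$ and assert that the stack-subtyping obligation $\subtype{\type}{\cdot}$ is vacuous. However, the paper's Definition of stack subtyping is stated only for a \emph{non-empty} type stack: it says $\subtype{\type_0}{\stacktop{\tstackelement{\Gamma}{\type}{\type'}}{\typestack}}$ holds when $\subtype{\type_0}{\type}$, and gives no clause for the empty case. So your premise~(2) of Progress is not actually discharged by the definitions as written. The paper avoids this by instead initializing with a one-element stack $\stack = \stacktop{\dstackelement{\emptyset}{\square}}{\cdot}$ and a matching one-element type stack $\typestack = \stacktop{\tstackelement{\emptyset}{\type}{\type}}{\cdot}$; then $\subtype{\type}{\typestack}$ reduces to the reflexive $\subtype{\type}{\type}$, and $\consistent{\typestack}{\stack}$ holds because the hole context $\square$ trivially has type $\type$ under $\emptyset[\square \mapsto \type]$. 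Your final paragraph about ``the base of the type stack staying empty'' is then unnecessary: Preservation hands back a fresh $\typestack'$ with $\subtype{\type'}{\typestack'}$ directly, so nothing about the initial stack shape needs to persist through the loop.

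This is a bookkeeping fix rather than a conceptual error; swapping in the paper's initial configuration (or explicitly extending the stack-subtyping definition with a base case) repairs the argument completely.
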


\begin{proof}
Let $\Gamma = \emptyset$, $\dynenv = \emptyset$, $\stack = \stacktop{\dstackelement{\emptyset}{\square}}{\cdot}$, and $\typestack = \stacktop{\tstackelement{\Gamma}{\type}{\type}}{\cdot}$. By construction we have \subtype{\type}{\typestack}, \consistent{\Gamma}{\dynenv}, and \consistent{\typestack}{\stack}. Thus, we satisfy the preconditions of progress and preservation, and soundness holds by standard argument.
\end{proof}

With this soundness theorem, it is straightforward to extend soundness to the check insertion rules.
We make use of a lemma that states that the type assigned by the check insertion rules will be equivalent
to the type assigned by the type checking rules.

\begin{lemma}
\label{lemma:check-tc-eq}
	\tcheck{\classtable}{\Gamma}{\expr}{\expr'}{\type_C} and \teval{\classtable}{\Gamma}{\expr'}{\type} if and only if
	\tcheck{\classtable}{\Gamma}{\expr}{\expr'}{\type}.
\end{lemma}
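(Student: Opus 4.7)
The plan is to proceed by induction on the derivation. The key observation is that the check insertion rules in Figure~\ref{figure:check-insertion-appendix} and the pure type checking rules in Figure~\ref{figure:type-checking-appendix} are in direct structural correspondence: each check insertion rule, when stripped of its rewriting output, produces a judgment that matches one of the type checking rules on the rewritten expression. For every syntactic form of \expr there is a single applicable check insertion rule, and for every syntactic form of \expr' produced by rewriting there is a single applicable type checking rule.

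For the ($\Leftarrow$) direction, I would assume \tcheck{\classtable}{\Gamma}{\expr}{\expr'}{\type} and show by induction on this derivation that \teval{\classtable}{\Gamma}{\expr'}{\type} holds; then by reusing the same derivation we immediately get \tcheck{\classtable}{\Gamma}{\expr}{\expr'}{\type_C} with $\type_C = \type$. For the ($\Rightarrow$) direction, assuming both \tcheck{\classtable}{\Gamma}{\expr}{\expr'}{\type_C} and \teval{\classtable}{\Gamma}{\expr'}{\type}, I would argue $\type = \type_C$ by induction on the check insertion derivation, which gives \tcheck{\classtable}{\Gamma}{\expr}{\expr'}{\type}. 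Most inductive cases are trivial: for instance (\cNil)/(\tNil), (\cType)/(\tType), (\cVar)/(\tVar), and the congruence-style cases (\cSeq)/(\tSeq), (\cEq)/(\tEq), (\cIf)/(\tIf) each pair a check insertion rule with a type checking rule that assigns exactly the same resulting type.

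The interesting cases are the method-call cases. In (\cappBaseUD), the rewriting leaves an ordinary method call \emethcall{\expr'}{m}{\expr_x'} with type $\type_2$ obtained from the class table; the corresponding pure rule (\tappBase) recovers the same $\type_2$ from the same class-table lookup and the subtype premise carries over unchanged. The two rules (\cappBaseBI) and (\cappComp) both produce a checked call \echeckmeth{\type_2}{\expr'}{m}{\expr_x'} annotated with the computed return type $\type_2$, and the pure rule (\tappComp) for checked calls simply reads off that same annotation as the resulting type, so the types must coincide.

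The main obstacle will be the (\cappComp) case, where the rewriting rule has many premises, including type checking of the type-level computations under the unrefined class table \unrefine{\classtable} and evaluation of those computations to types $\type_1$, $\type_2$ via the dynamic semantics. However, none of these additional premises affect the \emph{output} type $\type_2$ assigned by the rewriting rule, which is exactly the type annotation placed on the checked call. Since the pure rule (\tappComp) ignores the contents of the receiver and argument subexpressions beyond their types and simply reads the type annotation off the checked-call form, the types match without needing to re-establish the evaluation or type-level checking premises. Thus the induction goes through cleanly in this case as well, and the lemma follows.
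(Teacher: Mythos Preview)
Your proposal is correct and follows essentially the same approach as the paper, which simply states the lemma is ``straightforward by induction on check insertion rules.'' You have fleshed out the details the paper omits, correctly identifying the structural correspondence between the check insertion and type checking rules and handling the method-call cases where the correspondence is least obvious.
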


\begin{proof}
Straightforward by induction on check insertion rules \tcheck{\classtable}{\Gamma}{\expr}{\expr'}{\type}.	
\end{proof}

\begin{theorem}[Soundness of Check Insertion]
If \valid(\classtable) and \tcheck{\classtable}{\emptyset}{\expr}{\expr'}{\type} then either $\expr'$ reduces to a value, $\expr'$ reduces to blame, or $\expr '$ does not terminate.
\end{theorem}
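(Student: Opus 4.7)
The plan is to derive this as an immediate corollary of Theorem~\ref{theorem:tc-sound} (Soundness of Type Checking) by invoking Lemma~\ref{lemma:check-tc-eq}. Concretely, I would assume the hypotheses $\valid(\classtable)$ and $\tcheck{\classtable}{\emptyset}{\expr}{\expr'}{\type}$ and then apply the ``only if'' direction of Lemma~\ref{lemma:check-tc-eq} (instantiated with $\Gamma = \emptyset$, $\type_C = \type$, and $\type = \type$) to obtain both $\tcheck{\classtable}{\emptyset}{\expr}{\expr'}{\type}$ and $\teval{\classtable}{\emptyset}{\expr'}{\type}$ simultaneously. That is, the lemma tells us that the rewritten expression $\expr'$ is itself well-typed under the ordinary type-checking judgment at the same type $\type$.

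Once $\teval{\classtable}{\emptyset}{\expr'}{\type}$ is in hand, together with the assumed $\valid(\classtable)$ and the (redundant) $\tcheck{\classtable}{\emptyset}{\expr}{\expr'}{\type}$, all three hypotheses of Theorem~\ref{theorem:tc-sound} are satisfied. Applying that theorem directly gives the desired conclusion: $\expr'$ either reduces to a value, reduces to blame, or fails to terminate. No further induction or case analysis is needed, since the heavy lifting (preservation and progress over stacks and type stacks, stack-consistency invariants, handling of \reAppUD/\reRet, and dynamic checks via \reAppBI) has already been absorbed into the proof of Theorem~\ref{theorem:tc-sound}.

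The only subtlety, and the ``hard part'' of the argument, is ensuring that the instantiation of Lemma~\ref{lemma:check-tc-eq} is legitimate: the lemma as stated is a biconditional between ``rewrites to $\expr'$ with type $\type_C$ and $\expr'$ type checks to $\type$'' and ``rewrites to $\expr'$ with type $\type$.'' Reading the ``only if'' direction, from $\tcheck{\classtable}{\emptyset}{\expr}{\expr'}{\type}$ one needs to extract an intermediate type $\type_C$ such that the split form holds; by taking $\type_C = \type$ this is immediate. I would briefly justify this by noting that the check-insertion rules in Figure~\ref{figure:check-insertion-appendix} differ from the type-checking rules in Figure~\ref{figure:type-checking-appendix} only by inserted dynamic checks of the form \echeckmeth{A_2}{\expr'}{m}{\expr_x'}, whose return type in both judgments is exactly the annotated $A_2$, so the ``check-inserted'' type and the ``type-checked'' type of $\expr'$ always agree. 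This observation is precisely what Lemma~\ref{lemma:check-tc-eq} formalizes, so the corollary follows in a single line.
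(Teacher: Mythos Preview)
Your proposal is correct and matches the paper's own proof, which is simply ``By Theorem~\ref{theorem:tc-sound} and Lemma~\ref{lemma:check-tc-eq}.'' One small terminological slip: the direction of Lemma~\ref{lemma:check-tc-eq} you invoke (from $\tcheck{\classtable}{\emptyset}{\expr}{\expr'}{\type}$ to the conjunction $\tcheck{\classtable}{\emptyset}{\expr}{\expr'}{\type_C}\wedge\teval{\classtable}{\emptyset}{\expr'}{\type}$) is the ``if'' direction, not the ``only if'' direction, but your intent and the argument are clearly correct.
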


\begin{proof}
By Theorem~\ref{theorem:tc-sound} and Lemma~\ref{lemma:check-tc-eq}.	
\end{proof}

\begin{figure*}
	\judgementHead{Program type checking rules}{\progtyp{\classtable}{P}}
	$$
	\inference{\cteval{\classtable}{A.m}{\mthtype{\type_1}{\type_2}} &&
		\teval{\classtable}{[\eself\mapsto A, \var \mapsto \type_1]}{\expr}{\type_2'} \\
		\subtype{\type_2'}{\type_2} 
	}
	{\progtyp{\classtable}{\pdef{A.m}{\var}{\mthtype{\type_1}{\type_2}}{\expr}}}[\tPDef]
	\quad
	\inference{\cteval{\classtable}{A.m}{\methtype}}
	{\progtyp{\classtable}{\bidef{A.m}{\var}{\methtype}}}[\tPBuiltIn]
	$$
	\trspace
	$$
	\inference{\progtyp{\classtable}{P_1} && \progtyp{\classtable}{P_2}}
	{\progtyp{\classtable}{P_1, P_2}}[\tPSeq]
	$$
	
	\caption{Program type checking rules.}
	\label{fig:progtyp}
	
\end{figure*}

\subsection{Program Type Checking and Class Table Construction}
\label{subsec:prog-type-check}

The rules for type checking a program are given in Figure~\ref{fig:progtyp}. They
rely on using a class table. We omit a formal definition of the class table construction here as it is straightforward.
Informally: to construct a class table, traverse a program, adding type annotations from definitions and declarations
as you go. We can then check that \progtyp{\classtable}{P} to ensure that the program $P$ type checks under the
constructed class table. If \progtyp{\classtable}{P}, and the appropriate subtyping relations hold among methods
of subclasses, we can conclude that $valid(\classtable)$ according to definition~\ref{def:class-table-validity}.

\end{document}